\numberwithin{equation}{section}
\numberwithin{figure}{section}
 \newcommand{\ra}[1]{\renewcommand{\arraystretch}{#1}}
\newcommand{\etal}{\textit{et al}.}
\newtheorem{theorem}{Theorem}
\newtheorem{remark}{Remark}[section]
\DeclareMathOperator*{\argmin}{arg\,min}
\DeclareMathOperator*{\argmax}{arg\,max}
\newcommand\upperBound{\textrm{up}}
\newcommand\lowerBound{\textrm{lo}}
\providecommand{\e}[1]{\ensuremath{10^{#1}}}
\author{Anastasia Kruchinina}
\affiliation{Division of Scientific
  Computing, Department of Information Technology, Uppsala University, Sweden}
\email{anastasia.kruchinina@it.uu.se}
\author{Elias Rudberg}
\email{elias.rudberg@it.uu.se}
\affiliation{Division of Scientific
  Computing, Department of Information Technology, Uppsala University, Sweden}
\author{Emanuel H. Rubensson}
\email{emanuel.rubensson@it.uu.se}
\affiliation{Division of Scientific
  Computing, Department of Information Technology, Uppsala University, Sweden}
\title[Parameterless stopping criteria] {Parameterless stopping
  criteria for \\recursive density matrix expansions}
\begin{document}

\begin{abstract}
Parameterless stopping criteria for recursive polynomial expansions to
construct the density matrix in electronic structure calculations are
proposed. Based on convergence order estimation the new stopping
criteria automatically and accurately detect when the calculation is
dominated by numerical errors and continued iteration does not improve
the result.  Difficulties in selecting a stopping tolerance and
appropriately balancing it in relation to parameters controlling the
numerical accuracy are avoided.  Thus, our parameterless stopping
criteria stand in contrast to the standard approach to stop as soon as
some error measure goes below a user-defined parameter or tolerance.
We demonstrate that the stopping criteria work well both in dense and
sparse matrix calculations and in large-scale self-consistent field
calculations with the quantum chemistry program {\sc Ergo}
(\url{www.ergoscf.org}).

\end{abstract}


\section{Introduction}
An important computational task in electronic structure calculations
based on for example Hartree--Fock \cite{Roothaan} or Kohn--Sham density functional theory \cite{hohen,KohnSham65} is
the computation of the one-electron density matrix $D$ for a given
Fock or Kohn--Sham matrix $F$. The density matrix is the matrix for
orthogonal projection onto the subspace spanned by eigenvectors of $F$
that correspond to occupied electron orbitals:
\begin{align} 
  Fx_i &= \lambda_i x_i, \label{eq:eig_problem} \\
  D \coloneqq &\sum_{i=1}^{n_\textrm{occ}} x_ix_i^T,  \label{eq:D_def}
\end{align}
where the eigenvalues of $F$ are arranged in ascending order
\begin{equation}
 \lambda_1 \leq \lambda_2 \leq  \dots \leq  \lambda_{\textrm{homo}} <
\lambda_{\textrm{lumo}} \leq \dots  \leq \lambda_{N-1} \leq \lambda_{N},
\end{equation}
$n_\textrm{occ}$ is the number of occupied orbitals,
$\lambda_{\textrm{homo}}$ is the highest occupied molecular orbital
(homo) eigenvalue, and $\lambda_{\textrm{lumo}}$ is the lowest
unoccupied molecular orbital (lumo) eigenvalue and where we assume
that there is a gap
\begin{equation}
  \xi \coloneqq \lambda_{\textrm{lumo}} - \lambda_{\textrm{homo}} > 0 
\end{equation}
between eigenvalues corresponding to occupied and unoccupied orbitals.
An essentially direct method to compute $D$ is to compute an
eigendecomposition of $F$ and assemble $D$ according to
\eqref{eq:D_def}. Unfortunately, the computational cost of this
approach increases cubically with system size which limits
applications to rather small systems. Alternative methods have
therefore been developed with the aim to reduce the computational
complexity \cite{Bowler_2012}. One approach is to view the problem as
a matrix function
\begin{equation} \label{eq:stepfun}
  D = \theta(\mu I -F),
\end{equation}
where $\theta$ is the Heaviside function and $\mu$ is located between
$\lambda_{\textrm{homo}}$ and $\lambda_{\textrm{lumo}}$, which makes
\eqref{eq:stepfun} equivalent to the definition in \eqref{eq:D_def} \cite{GoedeckerColombo1994}.
A condition number for the problem of evaluating \eqref{eq:stepfun}
is given by
\begin{equation}
  \kappa \coloneqq \lim_{h\rightarrow 0} \sup_{A:\|A\| = \Delta\epsilon} \frac{\|\theta(\mu I - (F+hA)) - \theta(\mu I - F)\|}{h} = \frac{\Delta\epsilon}{\xi}
\end{equation}
where $\Delta\epsilon$ is the spectral width of
$F$~\cite{book-higham,EHRubensson12,m-accPuri}. We let $\|A\| = \Delta\epsilon$ to make the
condition number invariant both to scaling and shift of the
eigenspectrum of $F$~\cite{EHRubensson12}.

When the homo-lumo gap $\xi > 0$, a function that varies smoothly
between 0 and 1 in the gap can be used in place of \eqref{eq:stepfun}.
To construct such a function, recursive polynomial expansions or
density matrix purification have proven to be particularly simple and
efficient \cite{RudbergRubenssonJPCM11}. 
\begin{algorithm}\caption{Recursive polynomial expansion (general form) \label{alg:rec_exp_general}}
  \begin{algorithmic}[1]
    \State $X_0 = f_0(F)$
    \State $\widetilde{X}_0 = X_0+E_0$
    \While {stopping criterion not fulfilled, for $i = 1,2,\dots$}
    \State $X_{i} = f_i(\widetilde{X}_{i-1})$
    \State $\widetilde{X}_i = X_i+E_i$
    \EndWhile
  \end{algorithmic}
\end{algorithm}
The regularized step function
is built up by the recursive application of low-order polynomials
$f_0,f_1,\dots$, see Algorithm~\ref{alg:rec_exp_general}.  
With this approach, a linear scaling computational cost is achieved
provided that the matrices in the recursive expansion are sufficiently
sparse, which is usually ensured by removing small matrix elements
during the course of the recursive expansion~\cite{benzi_decay}.
In Algorithm~\ref{alg:rec_exp_general} the removal of matrix elements,
also called truncation, is written as an explicit perturbation $E_i$
added to the matrix in each iteration.
Several recursive expansion algorithms fitting into the general form
of Algorithm~\ref{alg:rec_exp_general} have been proposed. Note that here
we are considering methods that operate in orthogonal basis. The
function $f_0$ is usually a first order polynomial that moves all
eigenvalues into the $[0, \ 1]$ interval in reverse order. A natural
choice for the iteration function $f_i, \ i=1,2,\dots$ is the McWeeny
polynomial $3x^2-2x^3$ \cite{RMcWeeny56,PM1998}, which makes
Algorithm~\ref{alg:rec_exp_general} essentially equivalent to the
Newton--Schulz iteration for sign matrix
evaluation~\cite{book-higham}.  Furthermore, algorithms were developed
that do not require beforehand knowledge of $\mu$. Palser and
Manolopoulos proposed a recursive expansion based on the McWeeny
polynomial~\cite{PM1998}. Niklasson proposed a simple and efficient
algorithm based on the second order polynomials $x^2$ and
$2x-x^2$~\cite{Nikl2002}.
We will refer to this algorithm as the SP2 algorithm.  The recursive
application of polynomials gives a rapid increase of the polynomial
order and the computational cost increases only with the logarithm of
the condition number~\cite{Nikl2002,RudbergRubenssonJPCM11}.  The
computational cost can be further reduced by a scale-and-fold
acceleration technique giving an even weaker dependence on the
condition number~\cite{Rub2011}.  Recursive expansion algorithms are
key components in a number of linear scaling electronic structure
codes including {\sc CP2K}~\cite{VandeVondele_2012}, {\sc
  Ergo}~\cite{ergo_web,Ergo2011}, {\sc FreeON}~\cite{FreeON}, {\sc
  Honpas}~\cite{honpas}, and {\sc LATTE}~\cite{LATTE-jcp-2012}. Since
most of the computational work lies in matrix-matrix multiplications,
recursive expansion algorithms are well suited for parallel
implementations~\cite{Borstnik2014,Cawkwell2014,Chow_2015,Weber_2015}
and a competitive alternative to diagonalization also in the dense
matrix case~\cite{Cawkwell2014,Chow_2015}.

Different ways to decide when to stop the iterations have been
suggested. A common approach is to stop when some quantity, measuring
how far the matrix is from idempotency, goes below a predetermined
convergence threshold value. The deviation from idempotency has been
measured by the trace \cite{Daniels_1999,Weber_2015} or some matrix
norm
\cite{Chow_2015,book-higham,Mazziotti,curvy_steps,Suryanarayana2013291,VandeVondele_2012}
of $X_i-X_i^2$ sometimes scaled by for example the matrix dimension.
%
%
However, since the recursive expansion is at least quadratically
convergent, what one usually wants is to continue iterating until the
idempotency error does not anymore decrease substantially. This
happens when any further substantial decrease is prevented by rounding
errors or errors due to removal of matrix elements. 

To find a proper relation between matrix element removal and the
parameter measuring idempotency can be a delicate task, often left to
the user of the routine.
However, a few attempts to automatically detect when numerical errors
start to dominate exist in the literature. Palser and Manolopoulos
noted that with their expansions, the functional $\textrm{Tr}[X_iF]$
decreases monotonically in exact arithmetics and suggested to stop on
its first increase which should be an indication of
stagnation~\cite{PM1998}.  A similar criterion for the SP2 expansion
was proposed by Cawkwell \etal~\cite{Cawkwell2014}. In this case, the
iterations are stopped on an increase of the idempotency error measure
$|\textrm{Tr}[X_i-X_i^2]|$. However, the value of the functional or
the idempotency error may continue to decrease without significant
improvement of the accuracy. In such cases, the computational effort
in last iterations is no longer justified.
In the present work, we propose new parameterless stopping criteria
based on convergence order estimation. The stopping criteria are
general and can be used both in the dense and sparse matrix cases
using different strategies for truncation, and with different choices
of polynomials.


\section{Parameterless stopping criteria}\label{sec:paramless}

The iterations of density matrix expansions can be divided into three
phases \cite{note_phases,interior_eigenvalues_2014}: 1) the
conditioning phase where the deviation from idempotency decreases less
than quadratically or not at all, 2) the purification phase where the
idempotency error decreases at least quadratically, and 3) the
stagnation phase where the idempotency error again does not decrease
significantly or at all,
see Figure~\ref{fig:phases}.

\begin{figure}
  \begin{center}
    \includegraphics[width=0.9\textwidth]{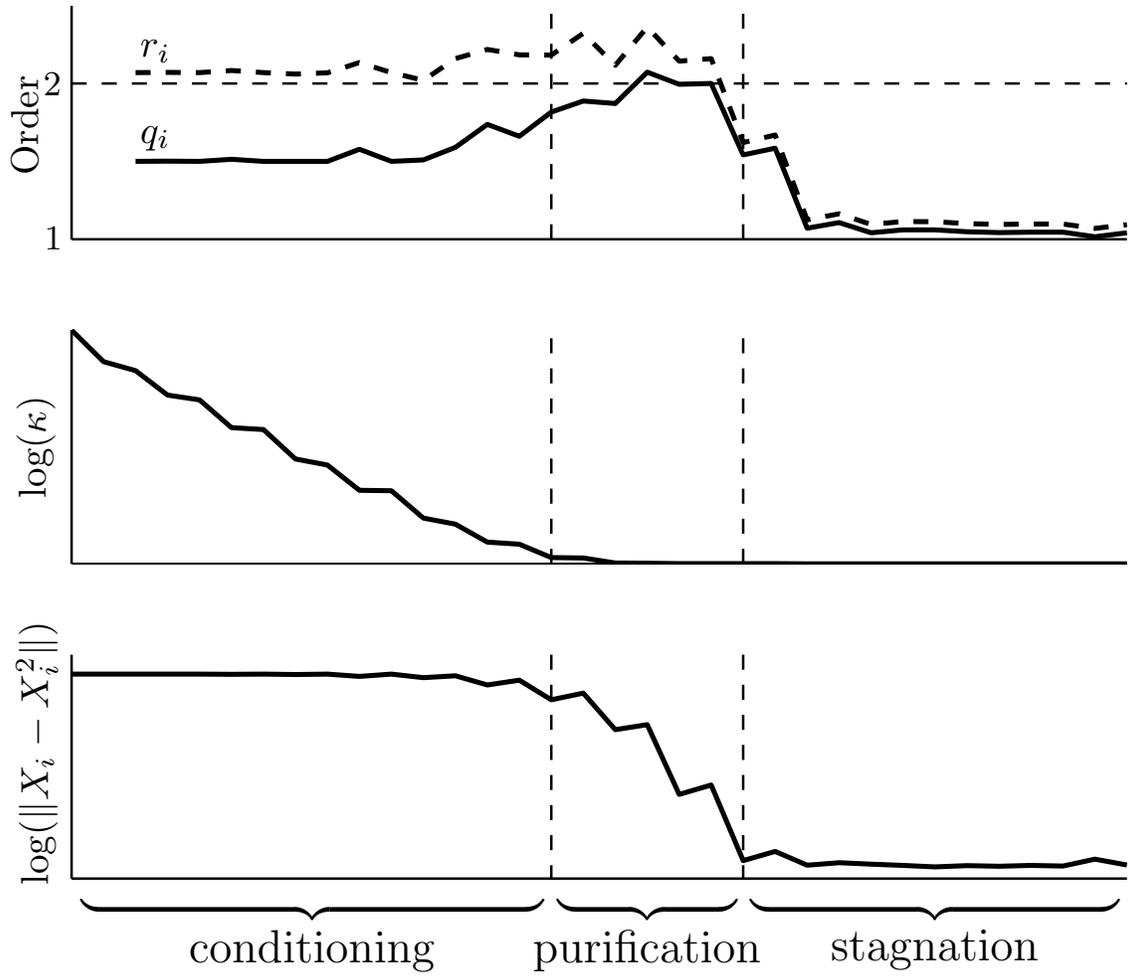}
  \end{center}
\caption{Illustration of the three phases for a recursive expansion of
  order $q=2$ based on the polynomials $x^2$ and $2x-x^2$ (see Section~\ref{sec:sp2}).  In the
  conditioning phase the matrix does not come closer to idempotency
  but the condition number, $\kappa$, is lowered.  In the purification
  phase the condition number is close to 1 and idempotency is
  approached quadratically. In the stagnation phase numerical errors
  start to dominate and the matrix again does not come closer to
  idempotency. The upper panel shows what we call the observed orders
  of convergence $q_i$ and $r_i$.  Throughout the conditioning and
  purification phases $r_i \geq 2$ but in the stagnation phase $r_i <
  2$.
  \label{fig:phases}}
\end{figure}
Here, we propose new parameterless stopping criteria designed to
automatically and accurately detect the transition between
purification and stagnation, so that the procedure can be stopped
without superfluous iterations in the stagnation phase.
We measure the deviation from idempotency by 
\begin{equation} \label{eq:idem_error_spectral}
  e_i \coloneqq \|X_i-X_i^2\|_2.
\end{equation}
We recall that an iterative method has asymptotic order of convergence $q$ if it
in exact arithmetics generates a sequence of errors $e_1,e_2,\dots$
such that
\begin{equation}
\lim_{i\rightarrow \infty} \frac{e_{i}}{e_{i-1}^q} = C^\infty,
\end{equation}
where $C^\infty$ is an asymptotic constant. The order of convergence
can also be observed numerically by 
\begin{equation} \label{eq:qi_def}
q \approx
\frac{\log(e_i/C^\infty)}{\log(e_{i-1})}  \eqqcolon q_i.
\end{equation}
Our stopping criteria are based on the detection of a discrepancy
between the asymptotic and observed orders of convergence.  When the
stagnation phase is entered numerical errors start to dominate,
leading to a fall in the observed order of convergence, see the upper
panel in Figure~\ref{fig:phases}. 

Since the observed order can be significantly smaller than the
asymptotic order also in the initial conditioning phase, an issue is
how to determine when the purification phase has started and one can
start to look for a drop in the order $q_i$.  A similar problem of
determining the iteration when to start to check the stopping
criterion appears in the method described by Cawkwell
\etal~\cite{Cawkwell2014}. Our solution is to replace the asymptotic
constant $C^\infty$ in~\eqref{eq:qi_def} with a larger value such that
the observed order of convergence \emph{in exact arithmetics} is
always larger than or equal to the asymptotic order of convergence.
In other words we want to find $C_q$, as small as possible, such that
in exact arithmetics
\begin{equation}\label{eq:Cq_relation}
 r_i \coloneqq \frac{\log(e_i/C_q)}{\log(e_{i-1})} \geq q  
\end{equation}
for all $i$.
One may let $C_q$ vary over the iterations but we will later see that
it is usually sufficient to use a single value $C_q$ for the whole
expansion. Important is that, in exact arithmetics, $r_i \geq q$ for all $i$.  
In the presence of numerical errors $r_i$ is
significantly smaller than $q$ only in the stagnation phase. We may therefore
start to look for a drop in $r_i$ immediately.  As soon as the
observed order of convergence, $r_i$, goes significantly below $q$, the procedure
should be stopped, since this indicates the transition between
purification and stagnation.
In this way we avoid the issue of detecting the transition between
conditioning and purification.
See the upper panel in Figure~\ref{fig:phases} for an illustration.

For clarity we note that~\eqref{eq:Cq_relation} is equivalent to
\begin{equation}\label{eq:Cq_relation_2}
  C_q \geq \frac{e_i}{e_{i-1}^q} =
  \frac{\|f_i(X_{i-1})-f_i(X_{i-1})^2\|_2}{\|X_{i-1} -
    X_{i-1}^2\|_2^q}.
\end{equation}
For generality and simplicity we want to assume as little information
as possible about the eigenspectra of $X_i,\ i=0,1,\dots$. We will use
the following theorem to find the smallest possible $C_q$ value
fulfilling~\eqref{eq:Cq_relation_2} with no or few assumptions about
the location of eigenvalues for several recursive expansion
polynomials of interest for appropriate choices of $q$.

\begin{theorem}
\label{Cq_theorem}
  Let $f$ be a continuous function from $[0, \ 1]$ to $[0, \ 1]$
  and assume that the limits
  \begin{equation} \label{eq:limAB}
    \lim_{x \rightarrow a^+} \frac{f(x)-f(x)^2}{(x-x^2)^q}
    \quad
    \text{and} \quad
    \lim_{x \rightarrow (1-a)^-} \frac{f(x)-f(x)^2}{(x-x^2)^q}
  \end{equation}  
  exist for some $q>0$, where $a \in [0, \ 0.5]$.  Let $H$ denote the set of Hermitian
  matrices with all eigenvalues in $[0, \ 1]$ and at least one eigenvalue
  in $[a, \ 1-a]$.  Then,
  \begin{equation}\label{eq:Cq_def}
    \max_{X \in H}\frac{\|f(X)-f(X)^2\|_2}{\|X-X^2\|_2^q} = \max_{x \in [0, 1]} g(x,a),
  \end{equation}
  where 
  \begin{equation}
    g(x,a) \coloneqq \left\{\begin{array}{ll}
    \frac{f(x)-f(x)^2}{(x-x^2)^q} & \textrm{if } a\leq x \leq 1-a, \\
    \frac{f(x)-f(x)^2}{(a-a^2)^q} & \textrm{otherwise,}
    \end{array}\right.
  \end{equation}
  is extended by continuity at $x=0$ and $x=1$ for $a=0$.
\end{theorem}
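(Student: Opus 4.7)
The plan is to reduce the matrix optimization to a scalar one by diagonalization and then to show that at most two eigenvalues are ever needed. Since every $X\in H$ is Hermitian with eigenvalues $\lambda_1,\dots,\lambda_N\in[0,1]$ and $f([0,1])\subseteq[0,1]$, both $X-X^2$ and $f(X)-f(X)^2$ are positive semidefinite, so writing $\phi(t):=f(t)-f(t)^2$ and $\psi(t):=t-t^2$ one has $\|f(X)-f(X)^2\|_2=\max_i\phi(\lambda_i)$ and $\|X-X^2\|_2=\max_i\psi(\lambda_i)$. Hence the left-hand side of \eqref{eq:Cq_def} equals the supremum of $\max_i\phi(\lambda_i)/(\max_i\psi(\lambda_i))^q$ over all finite multisets $\{\lambda_i\}\subset[0,1]$ containing at least one point of $[a,1-a]$.

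For the upper bound ``$\le$'', I fix such a multiset and let $\lambda^\star$ attain $\max_i\phi(\lambda_i)$. If $\lambda^\star\in[a,1-a]$ then $\max_i\psi(\lambda_i)\ge\psi(\lambda^\star)$, so the ratio is at most $\phi(\lambda^\star)/\psi(\lambda^\star)^q=g(\lambda^\star,a)$. If $\lambda^\star\notin[a,1-a]$ I use concavity of $\psi$ on $[0,1]$ together with $\psi(a)=\psi(1-a)=a-a^2$ to conclude $\psi\ge a-a^2$ on $[a,1-a]$; since some eigenvalue lies in $[a,1-a]$, $\max_i\psi(\lambda_i)\ge a-a^2$, and the ratio is at most $\phi(\lambda^\star)/(a-a^2)^q=g(\lambda^\star,a)$. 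Either way, the ratio is dominated by $\max_{x\in[0,1]}g(x,a)$.

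For the matching direction ``$\ge$'', I first argue that $g(\cdot,a)$ is continuous on $[0,1]$: the two pieces agree at $x=a$ and $x=1-a$ because $x-x^2=a-a^2$ there, and the endpoints $x\in\{0,1\}$ in the case $a=0$ are exactly those covered by the limits in \eqref{eq:limAB}. Compactness then yields a maximizer $x^\star$. If $x^\star\in[a,1-a]$, the scalar matrix $X=[x^\star]\in H$ already realizes $g(x^\star,a)$. Otherwise take $X=\mathrm{diag}(x^\star,a)\in H$: since $\psi(x^\star)<a-a^2$, the denominator equals $a-a^2$, and the numerator is $\phi(x^\star)$ provided $\phi(x^\star)\ge\phi(a)$. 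This last inequality must hold, for if not, then $g(a,a)=\phi(a)/(a-a^2)^q>\phi(x^\star)/(a-a^2)^q=g(x^\star,a)$, contradicting the maximality of $x^\star$.

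The main subtlety I expect is the boundary case $a=0$, where the second piece of $g$ is vacuous and the hypothesis \eqref{eq:limAB} is indispensable both for defining $g$ at $x\in\{0,1\}$ and for the compactness/continuity argument that turns the right-hand-side supremum into an attained maximum. Once this is in place, everything reduces cleanly to the one- and two-eigenvalue constructions above.
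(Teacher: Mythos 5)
Your proof is correct and follows essentially the same route as the paper's: reduce both spectral norms to maxima of the scalar functions $f(x)-f(x)^2$ and $x-x^2$ over the eigenvalues, bound the denominator using either the numerator's own eigenvalue (when it lies in $[a,\,1-a]$) or the guaranteed eigenvalue in $[a,\,1-a]$ together with concavity of $x-x^2$, and exhibit one- or two-eigenvalue matrices for attainment. If anything, your attainment step --- the matrix $\mathrm{diag}(x^\star,a)$ and the maximality argument giving $\phi(x^\star)\ge\phi(a)$ --- is more careful than the paper's closing remark that the maximum is attained for \emph{any} matrix with an eigenvalue at $\argmax_x g(x,a)$, which as stated overlooks that additional eigenvalues could inflate the denominator.
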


As suggested by Theorem~\ref{Cq_theorem}, we will choose $C_q
\coloneqq \max_{x \in [0, 1]} g(x,a)$ thereby making sure that
\eqref{eq:Cq_relation_2} is fulfilled. In principle, the value $a$
should be chosen as large as possible to get the smallest possible
$C_q$-value, since a larger $a$ gives a smaller set of matrices $H$
in~\eqref{eq:Cq_def}.  We note that it is possible to let $C_q$ vary
by choosing the largest possible $a$ in every iteration, but in this
work we will attempt to use a single value for the whole expansion
whenever possible. There is always at least one eigenvalue in the
interval $[0, \ 1]$ in every iteration. Therefore, if, for the given
recursive expansion polynomials $f_i$, the limits \eqref{eq:limAB}
exist with $a=0$ we employ the theorem with $a=0$. Only if the limits
\eqref{eq:limAB} do not exist with $a=0$ will we use the theorem with
$a>0$ and in general get different values of $C_q$ in every
iteration. In such a case some information about the eigenspectrum of
$X_i$ in each iteration $i$ is needed so that $a$ can be chosen
appropriately. The theorem should be invoked with $q$ equal to the
order of convergence of the recursive expansion.

\begin{proof}[Proof of Theorem~\ref{Cq_theorem}]
  Continuity of $f(x)$ together with existence of the two limits
  \eqref{eq:limAB} (needed in case $a=0$) implies
  existence of $\max_{x \in [0, 1]} g(x,a)$.
   Let $X$ be a matrix in $H$. Then, 
   \allowdisplaybreaks
  \begin{align}
    \frac{\|f(X)-f(X)^2\|_2}{\|X-X^2\|_2^q} \leq &    
    \max_{x\in[0,1]} 
    \max_{\begin{subarray}{c}y\in[a,1-a] \\ |y-0.5|\leq|x-0.5| \end{subarray}}
    \frac{f(x)-f(x)^2}{(y-y^2)^q} \label{eq:p1} \\
    =  & 
    \max\left(
    \max_{x\in[a,1-a]} \max_{|y-0.5|\leq|x-0.5|} \frac{f(x)-f(x)^2}{(y-y^2)^q}, 
    \right. 
    \\
    & \phantom{\max\left(\right.} 
    \left.\max_{x\in[0,1]\setminus[a,1-a]} \max_{y\in[a,1-a]} \frac{f(x)-f(x)^2}{(y-y^2)^q}
    \right) \notag 
    \\
    = & 
    \max\left(
    \max_{x\in[a,1-a]} \frac{f(x)-f(x)^2}{(x-x^2)^q},
    \right. 
    \\
    & \phantom{\max\left(\right.} 
    \left.\max_{x\in[0,1]\setminus[a,1-a]} \frac{f(x)-f(x)^2}{(a-a^2)^q}
    \right) \notag \\
    = & \max_{x\in[0,1]} g(x,a).  
  \end{align}
  The maximum in \eqref{eq:Cq_def} is attained for any matrix in $H$
  with an eigenvalue $\lambda = \argmax_{x\in[0,1]} g(x,a)$.
\end{proof}
\begin{remark}
Inequality \eqref{eq:p1} can be interpreted as follows. The arguments
of the maxima, $x$ and $y$, play the roles of the eigenvalues of $X$
that define $\|f(X)-f(X)^2\|_2$ and $\|X-X^2\|_2$, respectively.
While, for all we know, $x$ can be any eigenvalue, $y$ is necessarily
the eigenvalue of $X$ that is closest to 0.5. In other words,
\begin{equation}
  y = \argmin_{\lambda\in\{\lambda_i\}} |\lambda-0.5|
\end{equation}
and
\begin{equation}
  \|X-X^2\|_2 = \max_{\lambda\in\{\lambda_i\}} \lambda-\lambda^2 =
  y-y^2,
\end{equation}
where  $\{\lambda_i\}$ are the eigenvalues of $X$.
Since there is at least one eigenvalue at $x$ and at least one
eigenvalue in $[a, \ 1-a]$ the constraints on $y$ in \eqref{eq:p1}
follow.
\end{remark}

By construction, the observed convergence order $r_i$ may in exact
arithmetics end up exactly equal to $q$. We want to detect when
numerical errors start to dominate and $r_i$ drops significantly below
$q$ but at the same time we want to disregard small perturbations that
cause $r_i$ to go only slightly below $q$. In practice we will
therefore use a parameter $\tilde{q} < q$ in place of $q$. In other
words, our stopping criteria will be on the form \emph{stop as soon as
  $\log(e_i/C_q)/\log(e_{i-1}) < \tilde{q}$}.  In this work we will
mainly consider second order expansions with $q=2$ and will then use
$\tilde{q} = 1.8$.

The spectral norm is often expensive to compute and one may therefore
want to use an estimate in its place. One cheap alternative is the
Frobenius norm
\begin{equation} \label{eq:frobnorm}
  v_i \coloneqq \|X_i-X_i^2\|_F = \sqrt{ \sum_j \left(\lambda_j - \lambda_j^2\right)^2},
\end{equation}
where $\{\lambda_j\}$ are the eigenvalues of $X_i$. 
We expect the Frobenius norm to be a good estimate to the spectral
norm when we are close to convergence since then many eigenvalues are
clustered around 0 and 1 and do not contribute significantly to the
sum in~\eqref{eq:frobnorm}.
Since $\|X_i-X_i^2\|_2 \leq \|X_i-X_i^2\|_F$ we have that
\begin{equation} \label{eq:K_def}
  v_i = K_i e_i
\end{equation}
for some $K_i \geq 1$. 
Assume that for a given $q$
\begin{equation} \label{eq:K_ratio}
  K_i \leq K_{i-1}^q.
\end{equation}
Then,
\begin{equation}
  v_i = K_i e_i \leq K_{i-1}^q C_q e_{i-1}^q = C_q v_{i-1}^q,
\end{equation}
and, assuming that $v_{i-1}<1$, 
\begin{equation}
  \frac{\log(v_i/C_q)}{\log(v_{i-1})} \geq q,
  \label{eq:estim_order_using_frob_norm}
\end{equation}
which means that if the assumptions above hold we will not stop
prematurely.
In the following sections we will see under what
conditions~\eqref{eq:K_ratio} holds for specific choices of
polynomials for the recursive expansion.

For very large systems, it may happen that $v_{i}$ never goes below 1
because of the large number of eigenvalues that contribute to the sum
in \eqref{eq:frobnorm}, in such a case the stopping criterion in the
suggested above form will not be checked. One may then make use of the
so-called mixed norm~\cite{mixedNormTrunc}. Let the
matrix be divided into square submatrices of equal size, padding the
matrix with zeros if needed. One can define the mixed norm as the
spectral norm of a matrix with elements equal to the Frobenius norms
of the obtained submatrices.  It can be shown that
\begin{align}
 e_i \leq m_i \leq v_i
\end{align}
where $m_i \coloneqq \|X_i-X_i^2\|_M$ is the mixed norm of $X_i-X_i^2$
in iteration $i$. The result for the Frobenius norm above that we will
not stop prematurely therefore holds also for the mixed norm under the
corresponding assumptions, i.e.~\eqref{eq:K_def}
and~\eqref{eq:K_ratio} with $v_i$ replaced by $m_i$ and $m_{i-1} <
1$. However, with a fixed submatrix block size the asymptotic behavior
of the mixed norm follows that of the spectral norm.  Thus, the
quality of the stopping criterion will not deteriorate with increasing
system size. One may therefore consider using the mixed norm for large
systems.

We will in the following mainly focus on the regular and accelerated
McWeeny and second order spectral projection polynomials.


\section{McWeeny polynomial}
We will first consider a recursive polynomial expansion based on the
McWeeny polynomial $p_{\mathrm{mw}}(x) \coloneqq 3x^2 -
2x^3$~\cite{RMcWeeny56,PM1998}, i.e.
Algorithm~\ref{alg:rec_exp_general} with
\begin{equation}
\left\{
\begin{aligned}
f_0(x) & = \displaystyle\frac{\mu-x}{2\max(\lambda_{\textrm{max}}-\mu, \mu-\lambda_{\textrm{min}})}+0.5, \quad \\
f_i(x) & = p_{\textrm{mw}}(x), \quad i = 1,2,\dots.
\end{aligned}
\right.
\end{equation}
Here, $\lambda_\textrm{min}$ and $\lambda_\textrm{max}$ are the
extremal eigenvalues of $F$ or bounds thereof,
i.e. 
\begin{equation}
\lambda_\textrm{min} \leq \lambda_1 \textrm{ and } 
\lambda_\textrm{max} \geq \lambda_N.
\end{equation}
We want to find the smallest value
$C_2^{\textrm{mw}}$ such that
\begin{equation}
  \frac{\log(e_i/C_2^{\textrm{mw}})}{\log(e_{i-1})} \geq 2
\end{equation}
in exact arithmetics. Note that 
\begin{equation}
  \lim_{x \rightarrow 0^+} \frac{p_{\textrm{mw}}(x)-p_{\textrm{mw}}(x)^2}{(x-x^2)^2} = 
  \lim_{x \rightarrow 1^-} \frac{p_{\textrm{mw}}(x)-p_{\textrm{mw}}(x)^2}{(x-x^2)^2} = 3.
\end{equation} 
Thus, we may invoke Theorem~\ref{Cq_theorem} with $f =
p_{\textrm{mw}}$, $a=0$ and $q=2$ which gives
\begin{align}
  C_2^{\textrm{mw}} 
  = \max_{x\in[0,1]} \frac{p_{\textrm{mw}}(x)-p_{\textrm{mw}}(x)^2}{(x-x^2)^2} 
  = \max_{x\in[0,1]} (3 + 4 x - 4 x^2) = 4
\end{align}
and we suggest to stop the expansion as soon as
\begin{equation}
  \frac{\log(e_i/4)}{\log(e_{i-1})} < 1.8.
\end{equation}

We note that the McWeeny polynomial can be defined as the polynomial
with fixed points at 0 and 1 and one vanishing derivative at 0 and
1. For polynomials with fixed points at 0 and 1 and $q-1$ vanishing
derivatives at 0 and 1, sometimes referred to as the Holas polynomials
\cite{Holas_2001}, it can be shown that the smallest value
$C_q^{\textrm{h}}$ such that
\begin{equation}
  \frac{\log(e_i/C_q^{\textrm{h}})}{\log(e_{i-1})} \geq q
\end{equation}
is $C_q^{\textrm{h}} = 4^{q-1}$.


\subsection{Acceleration}
Prior knowledge of the homo and lumo eigenvalues makes it possible to
use a scale-and-fold technique to accelerate
convergence~\cite{Rub2011}. In each iteration the eigenspectrum is
stretched out so that the projection polynomial folds the
eigenspectrum over itself.  This technique is quite general and may be
applied to a number of recursive expansions making use of various
polynomials. In case of the McWeeny polynomial the eigenspectrum is
stretched out around 0.5, see Algorithm~\ref{alg:mcw-acc}.  The amount
of stretching is determined by the parameter $\tilde{\xi}$ which is an
estimate of the homo-lumo gap $\xi$ such that at least one eigenvalue
of $F$ is in $[\mu-\tilde{\xi}/2, \ \mu+\tilde{\xi}/2]$.  With
$\alpha=1$ on line~\ref{alg:mcw-acc_alpha}, Algorithm~\ref{alg:mcw-acc} reduces to the
regular McWeeny expansion discussed above.


\begin{algorithm}\caption{McW-ACC algorithm\label{alg:mcw-acc}}
  \begin{algorithmic}[1]
    \State \textbf{input:} $F$, $\lambda_\textrm{min}$,
    $\lambda_\textrm{max}$, $\mu$, $\tilde{\xi}$
    \State $\gamma = 2\max(\lambda_{\textrm{max}}-\mu, \mu-\lambda_{\textrm{min}})$
    \State $\beta_0 = 0.5(1-\tilde{\xi}/\gamma)$
    \State $X_0 = \frac{\mu I-F}{\gamma}+0.5I$
    \State $\widetilde{X}_0 = X_0+E_0$
    \State $e_0 = \|\widetilde{X}_0-\widetilde{X}_0^2\|_2$
    \For {$i = 1,2,\dots$}

    \State $\alpha = 3/(12\beta_{i-1}^2-18\beta_{i-1}+9)^{1/2}$  \label{alg:mcw-acc_alpha}
    \State $X_s = \alpha(\widetilde{X}_{i-1}-0.5I)+0.5I$
    \State $X_i = 3X_s^2-2X_s^3$    \label{alg:mcw-acc_x_i}
    \State $\beta_s = \alpha(\beta_{i-1}-0.5)+0.5$
    \State $\beta_i = 3\beta_s^2-2\beta_s^3$
    

    \State $\widetilde{X}_i = X_i+E_i$
    \State $e_i = \|\widetilde{X}_i-\widetilde{X}_i^2\|_2$
    \If{$\frac{\log(e_i/4)}{\log(e_{i-1})} < 1.8$}
    \State $n=i$
    \State \textbf{break}
    \EndIf
    \EndFor
    \State \textbf{output:} $n$, $\widetilde{X}_n$
  \end{algorithmic}
\end{algorithm}

As for the regular McWeeny expansion we want to find the smallest
value $C_2^{\textrm{mwa}}$ such that
\begin{equation}
  \frac{\log(e_i/C_2^{\textrm{mwa}})}{\log(e_{i-1})} \geq 2. 
\end{equation}
For the sake of analysis we introduce  
\begin{align}
  p_{\textrm{mwa}}(x, \beta) &\coloneqq p_{\textrm{mw}}\left(3(12\beta^2-18\beta+9)^{-1/2}(x-0.5) + 0.5\right).
\end{align}
Note that line~\ref{alg:mcw-acc_x_i} in Algorithm~\ref{alg:mcw-acc} is equivalent to
$X_i = p_{\textrm{mwa}}(\widetilde{X}_{i-1}, \beta_{i-1})$.
Furthermore, let
\begin{equation}
  g(x,\beta) = 
  \begin{dcases}
    g_1(x,\beta) & \textrm{if } x\in [\beta,\ 1-\beta], \\
    g_2(x,\beta) & \textrm{otherwise,}
  \end{dcases}
\end{equation}
where 
\begin{align}
 g_1(x, \beta) &= \frac{p_{\textrm{mwa}}(x, \beta)-p_{\textrm{mwa}}(x, \beta)^2}{(x-x^2)^2},\\
 g_2(x, \beta) &= \frac{p_{\textrm{mwa}}(x, \beta)-p_{\textrm{mwa}}(x, \beta)^2}{(\beta-\beta^2)^2}, 
\end{align} 
 which are plotted in Figure~\ref{fig:mwacc_gx} for $\beta = 0.3$.

 \begin{figure}[ht!]
        \centering
        \captionsetup[subfigure]{justification=centering}
        \begin{subfigure}[t]{.45\textwidth}
        \centering
	  \includegraphics[width=\textwidth]{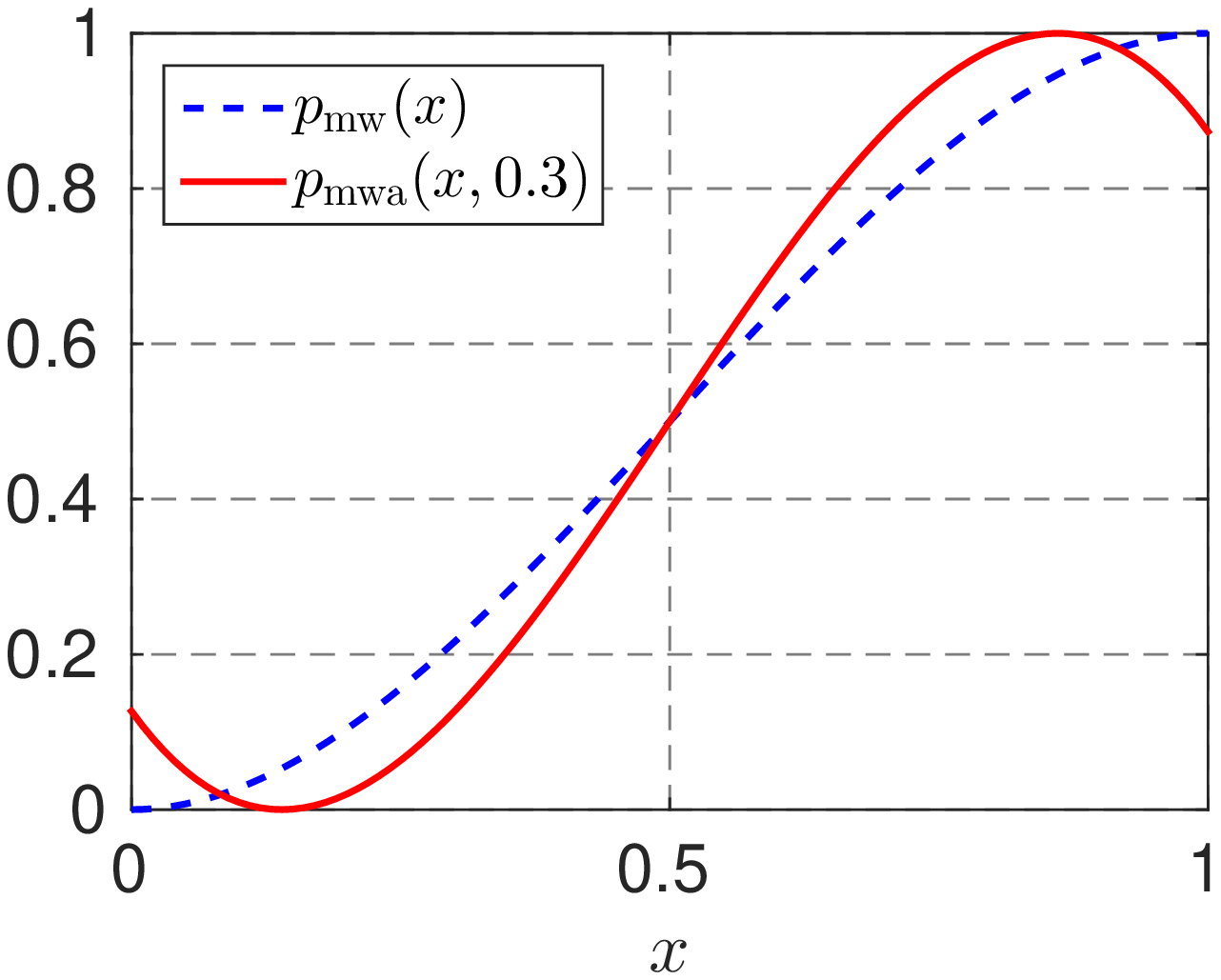}
	\end{subfigure}
	\begin{subfigure}[t]{.45\textwidth}
	\centering
	  \includegraphics[width=\textwidth]{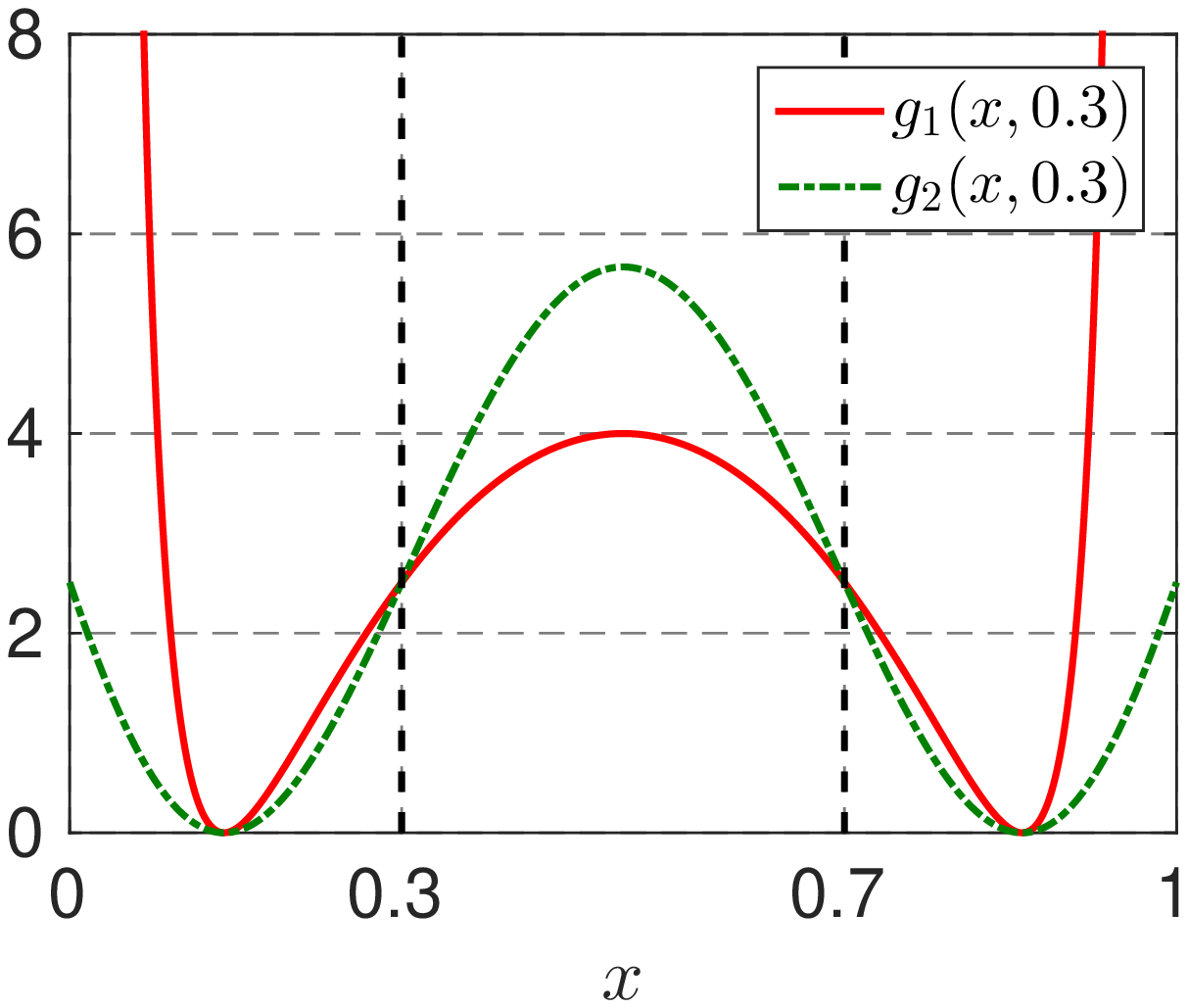}
	\end{subfigure}
  \caption{The left panel shows the regular $p_{\textrm{mw}}(x)$ and accelerated $p_{\textrm{mwa}}(x, \beta)$ McWeeny polynomials with $\beta = 0.3$. The right panel shows the functions $g_1(x, \beta)$ and $g_2(x, \beta)$ with $\beta = 0.3$. Black vertical dashed lines indicate $\beta$ and $1-\beta$. }
  \label{fig:mwacc_gx}
\end{figure}
 
Note that when $\beta>0$, the limits $\lim_{x\to 0^+} g_1(x, \beta)$
and $\lim_{x\to 1^-} g_1(x, \beta)$ do not exist. However, we may use
Theorem~\ref{Cq_theorem} with $a>0$.
We have that 
\begin{align}
  &\lim_{x \rightarrow \beta^+} g_1(x, \beta) =  \lim_{x \rightarrow (1-\beta)^-} g_1(x, \beta)  = 
\begin{dcases}
\frac{16 (\beta-1.5)^2 (\beta-0.75)^2}{(\beta-1)^2 \left(4\beta^2-6\beta+3\right)^3} & \text{ if } \beta > 0,\\
3 & \text{ if } \beta = 0.
\end{dcases}
\end{align} 
Note that 
\begin{align}
 \lim_{\beta \rightarrow 0^+}\lim_{x \rightarrow \beta^+} g_1(x, \beta)  \neq \lim_{x \rightarrow 0^+} g_1(x, 0) 
\end{align} 
due to discontinuity of $g_1(x, \beta)$ at $x=0$ for $\beta > 0$.
Since there is at least one eigenvalue in $[\mu-\tilde{\xi}/2,\
  \mu+\tilde{\xi}/2]$, at least one eigenvalue of $X_i$ will be in the
interval $[\beta_i, \ 1-\beta_i]$ in each iteration $i$.  Therefore, we
may for each iteration $i$ invoke Theorem~\ref{Cq_theorem} with $f(x)
= p_{\textrm{mwa}}(x, \beta_i)$, $a = \beta_i$ and $q = 2$.  We have
that
\begin{align}
\max_{x\in[0,1]}g(x, \beta_i) &= \max{ \left( \max_{x\in[\beta_i, 1-\beta_i]}g_1(x, \beta_i), \max_{x \in [0,1] \setminus [\beta_i, 1-\beta_i]} g_2(x, \beta_i) \right) } \\ 
&= g_1(0.5, \beta_i)= 4,
\end{align}
where we used that for $\beta > 0$ the function $g_2(x, \beta)$ is convex on the intervals $[0, \ \beta]$ and $[1-\beta, \ 1]$, and 
\begin{align}
\max_{x \in [0,1] \setminus [\beta, 1-\beta]} g_2(x, \beta) &= g_2(0, \beta) = g_2(1, \beta) = g_2(\beta, \beta) = g_2(1-\beta, \beta) \\
&= g_1(\beta, \beta) \leq \max_{x\in[\beta, 1-\beta]}g_1(x, \beta).
\end{align}
Therefore, $C_2^{\textrm{mwa}} = 4$, which gives the stopping criterion
in Algorithm~\ref{alg:mcw-acc}.

\subsection{Estimation of the order of convergence}
As discussed earlier one may want to use the Frobenius norm in place
of the spectral norm to measure the idempotency error. It is then
desired that \eqref{eq:K_ratio} is fulfilled, at least in iterations
prior to the stagnation phase. Let $X$ be a Hermitian matrix with
eigenvalues $\{\lambda_j:0\leq\lambda_j\leq1\}$ and let
\begin{align}
  y & = \argmin_{\lambda\in\{\lambda_j\}} |\lambda-0.5| > 0, \\
  \beta & = \min(y,1-y). 
\end{align}
Then, 
\begin{align}
\|X-X^2\|_2 & = \beta-\beta^2 \quad \text{and} \\
\|p_{\textrm{mw}}(X)-p_{\textrm{mw}}(X)^2\|_2 & =
p_{\textrm{mw}}(\beta)-p_{\textrm{mw}}(\beta)^2.
\end{align}

Consider the function 
\begin{align}
f(x) &= \frac{x - x^2}{\beta - \beta^2} -  \frac{p_{\textrm{mw}}(x) - p_{\textrm{mw}}(x)^2}{p_{\textrm{mw}}(\beta) - p_{\textrm{mw}}(\beta)^2} \\
&= -\frac{x\, \left(x - 1\right)\, \left(4\, \beta^4 - 8\, \beta^3 + \beta^2 + 3\, \beta - 4\, x^4 + 8\, x^3 - x^2 - 3\, x\right)}{\beta^2\, {\left(\beta - 1\right)}^2\, \left( - 4\, \beta^2 + 4\, \beta + 3\right)}.
\end{align}
The roots of this function on the interval $[0, \ 1]$ are 0, 1, $\beta$, and $1-\beta$, and the function is non-negative on the intervals $[0, \ \beta]$ and $[1-\beta,\ 1]$. Therefore
\begin{align}
\frac{x - x^2}{\beta - \beta^2} \geq \frac{p_{\textrm{mw}}(x) - p_{\textrm{mw}}(x)^2}{p_{\textrm{mw}}(\beta) - p_{\textrm{mw}}(\beta)^2} \geq 0
\end{align}
for $x - x^2 \leq \beta - \beta^2$ and
\begin{align}
&\frac{\|X-X^2\|_F^2 \; \|p_{\textrm{mw}}(X)-p_{\textrm{mw}}(X)^2\|_2^2}{\|X-X^2\|_2^2 \; \|p_{\textrm{mw}}(X)-p_{\textrm{mw}}(X)^2\|_F^2} \\
&=
 \frac{\sum_j \left((\lambda_j - \lambda_j^2) / (\beta -\beta^2)\right)^2 }{\sum_j \left((p_{\textrm{mw}}(\lambda_j) - p_{\textrm{mw}}(\lambda_j)^2) / (p_{\textrm{mw}}(\beta) - p_{\textrm{mw}}(\beta)^2)\right)^2} \geq 1. \nonumber
\end{align}
Thus, we have that $\frac{K_{i-1}^2}{K_i^2} \geq1$ for every $i\geq1$. Since $K_{i-1} \geq 1$  it follows that
\begin{align}
 K_{i-1}^2 \geq K_{i-1} \geq K_{i},
\end{align}
i.e. \eqref{eq:K_ratio} is fulfilled when using the regular McWeeny expansion.

However, in case of the accelerated McWeeny expansion inequality
\eqref{eq:K_ratio} is not always satisfied. Consider for example the
diagonal matrix
\begin{align}
 X = \text{diag}(0, 0, 0, \beta, 1-\beta).
\end{align}
Then,
\begin{align}
p_{\textrm{mwa}}(0, \beta) - p_{\textrm{mwa}}(0, \beta)^2 &= p_{\textrm{mwa}}(\beta, \beta) - p_{\textrm{mwa}}(\beta, \beta)^2
\end{align}
and
\begin{align}
&\frac{\|X-X^2\|_F^2 \; \|p_{\textrm{mwa}}(X)-p_{\textrm{mwa}}(X)^2\|_2}{\|X-X^2\|_2^2 \;  \|p_{\textrm{mwa}}(X)-p_{\textrm{mwa}}(X)^2\|_F} \\
 &= \frac{ 2\left(\beta-\beta^2\right)^2 \left( p_{\textrm{mwa}}(\beta, \beta) - p_{\textrm{mwa}}(\beta, \beta)^2 \right)  }
         {  \left(\beta-\beta^2\right)^2 \sqrt{5\left(p_{\textrm{mwa}}(\beta, \beta) - p_{\textrm{mwa}}(\beta, \beta)^2\right)^2 }}\nonumber\\
 &=\frac{2}{\sqrt{5}} \approx 0.8944. 
\end{align}
Thus, for the accelerated McWeeny expansion it may happen that
$\frac{K_{i-1}^2}{K_i} < 1$ for some $i$.

To summarize, for the regular McWeeny scheme we will not stop
prematurely when the Frobenius norm is used in place of the spectral
norm. In the accelerated scheme we suggest to turn off the
acceleration at the start of the purification phase when its effect
anyway is small. After that the regular scheme is used and the
Frobenius norm may be used. We will discuss how to detect the
transition between the conditioning and purification phases in
Section~\ref{sec:sp2}.


\section{Second order spectral projection (SP2) expansion}\label{sec:sp2}
In the remainder of this article we will focus on recursive polynomial
expansions based on the polynomials $p_{\textrm{sp}}^{(1)}(x)=x^2$ and
$p_{\textrm{sp}}^{(2)}(x) = 2x-x^2$ proposed by
Mazziotti~\cite{Mazziotti2001}.  In an algorithm proposed by Niklasson
the polynomials used in each iteration are chosen based on the trace
of the matrix \cite{Nikl2002}. This algorithm, hereinafter the
original SP2 algorithm, is given by Algorithm~\ref{alg:sp2}, including
the new stopping criterion proposed here.  
\begin{algorithm}\caption{The original SP2 algorithm with new stopping criterion\label{alg:sp2}}
  \begin{algorithmic}[1]
    \State \textbf{input:} $F$, $\lambda_\textrm{min}$, $\lambda_\textrm{max}$
    \State $X_0 = 
    \frac{\lambda_\textrm{max}I - F}
         {\lambda_\textrm{max} - \lambda_\textrm{min}}$
    \State $\widetilde{X}_0 = X_0+E_0$
    \State $e_0 = \|\widetilde{X}_0-\widetilde{X}_0^2\|_2$
    \State $C_2^{\textrm{sp}} =\frac{1}{32} \left(71+17 \sqrt{17}\right)$
    \For{$i = 1,2,\dots$}
    \If{$\textrm{Tr}[\widetilde{X}_{i-1}] > n_\textrm{occ}$} \label{line:cond_sp2}
    \State $X_i = \widetilde{X}_{i-1}^2$
    \State $p_i = 1$
    \Else
    \State $X_i = 2\widetilde{X}_{i-1}-\widetilde{X}_{i-1}^2$
    \State $p_i = 0$
    \EndIf
    \State $\widetilde{X}_i = X_i+E_i$
    \State $e_i = \|\widetilde{X}_i-\widetilde{X}_i^2\|_2$
    \If{$i\geq 2$ \textbf{and} $p_i \neq p_{i-1}$ \textbf{and} $\frac{\log(e_i/C_2^{\textrm{sp}})}{\log(e_{i-2})} < 1.8$}
    \State $n=i$
    \State \textbf{break}    
    \EndIf
    \EndFor
    \State \textbf{output:} $n$, $\widetilde{X}_n$
  \end{algorithmic}
\end{algorithm}
Knowledge about the homo
and lumo eigenvalues also makes it possible to work with a predefined
sequence of polynomials and employ a scale-and-fold acceleration
technique~\cite{Rub2011} as described in
Algorithms~\ref{alg:selection_polynomials_sp2acc}
and~\ref{alg:sp2acc}, respectively.
In Algorithm~\ref{alg:selection_polynomials_sp2acc},
$\lambda_\textrm{homo}^\textrm{out}$ and
$\lambda_\textrm{homo}^\textrm{in}$ are bounds of the homo eigenvalue
such that
\begin{equation}
\lambda_\textrm{homo}^\textrm{out} \leq
\lambda_\textrm{homo} \leq
\lambda_\textrm{homo}^\textrm{in},
\end{equation}
$\lambda_\textrm{lumo}^\textrm{in}$ and
$\lambda_\textrm{lumo}^\textrm{out}$ are bounds of the lumo eigenvalue
such that
\begin{equation}
\lambda_\textrm{lumo}^\textrm{in} \leq
\lambda_\textrm{lumo} \leq \lambda_\textrm{lumo}^\textrm{out},
\end{equation}
and $\varepsilon_M$ denotes the machine epsilon.

Using the scale-and-fold technique the eigenspectrum is
stretched out outside the $[0, \ 1]$ interval and folded
back by applying the SP2 polynomials. 
The unoccupied part of the
eigenspectrum is partially stretched out below 0 and folded back into
$[0, \ 1]$ using the polynomial $((1-\alpha)I + \alpha x)^2$, where
$\alpha \geq 1$ determines the amount of stretching or
acceleration. Similarly, the occupied part of the eigenspectrum is
stretched out above 1 and folded back using $2\alpha x - (\alpha
x)^2$.

\begin{algorithm}\caption{Determination of polynomials for SP2-ACC homo/lumo-based expansion\label{alg:selection_polynomials_sp2acc}}
  \begin{algorithmic}[1]  
    \State \textbf{input:} $\lambda_\textrm{min}$, $\lambda_\textrm{max}$, 
    $\lambda_\textrm{homo}^\textrm{out}$, $\lambda_\textrm{homo}^\textrm{in}$,
    $\lambda_\textrm{lumo}^\textrm{in}$, $\lambda_\textrm{lumo}^\textrm{out}$
    \State $\beta_\upperBound = 1-
    \frac{\lambda_\textrm{max} - \lambda_{\textrm{homo}}^\textrm{in}}
         {\lambda_\textrm{max} - \lambda_\textrm{min}}$
    \State $\beta_\lowerBound = 1-
    \frac{\lambda_\textrm{max} - \lambda_{\textrm{homo}}^\textrm{out}}
         {\lambda_\textrm{max} - \lambda_\textrm{min}}$ 
         \label{alg:selection_polynomials_sp2acc_beta_low}
    \State $\gamma_\upperBound = 
    \frac{\lambda_\textrm{max} - \lambda_{\textrm{lumo}}^\textrm{in}}
         {\lambda_\textrm{max} - \lambda_\textrm{min}}$
    \State $\gamma_\lowerBound = 
    \frac{\lambda_\textrm{max} - \lambda_{\textrm{lumo}}^\textrm{out}}
         {\lambda_\textrm{max} - \lambda_\textrm{min}}$ 
         \label{alg:selection_polynomials_sp2acc_gamma_low}
    \State $\delta = 0.01$
    \State $i = 0$
    \While {($\beta_\upperBound-\beta_\upperBound^2 > \varepsilon_M $ or $\gamma_\upperBound-\gamma_\upperBound^2 > \varepsilon_M$) and $p_i = p_{i-1}$ }
    \State $i = i + 1$
    \If{$\beta_\lowerBound < \delta$ \textbf{and} $\gamma_\lowerBound < \delta$ }  \label{alg:selection_polynomials_sp2acc_delta_start}
    \State $\beta_\lowerBound = 0$, $\gamma_\lowerBound = 0$
    \State $n_{\textrm{min}} = i+1$
    \State $\delta = 0$
    \EndIf     \label{alg:selection_polynomials_sp2acc_delta_end}
    \If{$\gamma_\upperBound \geq \beta_\upperBound$}
    \State $p_{i} = 1$
    \State $\alpha_{i} = 2/(2-\gamma_\lowerBound)$ \label{alg:selection_polynomials_sp2acc_alpha1}

    \State $\gamma_\lowerBound = ((1-\alpha_{i})+\alpha_{i}\gamma_\lowerBound)^2,
    \quad
    \gamma_\upperBound = ((1-\alpha_{i})+\alpha_{i}\gamma_\upperBound)^2$
    \State $\beta_\lowerBound = 2\alpha_i\beta_\lowerBound-(\alpha_i\beta_\lowerBound)^2,
    \quad
    \beta_\upperBound = 2\alpha_i\beta_\upperBound-(\alpha_i\beta_\upperBound)^2$

    \Else
    \State $p_{i} = 0$ 
    \State $\alpha_{i} = 2/(2-\beta_\lowerBound)$ \label{alg:selection_polynomials_sp2acc_alpha2}

    \State $\gamma_\lowerBound = 2\alpha_i\gamma_\lowerBound-(\alpha_i\gamma_\lowerBound)^2,
    \quad
    \gamma_\upperBound = 2\alpha_i\gamma_\upperBound-(\alpha_i\gamma_\upperBound)^2$
    \State $\beta_\lowerBound = ((1-\alpha_{i})+\alpha_{i}\beta_\lowerBound)^2,
    \quad
    \beta_\upperBound = ((1-\alpha_{i})+\alpha_{i}\beta_\upperBound)^2$


    \EndIf
    \EndWhile
    \State $n_{\textrm{max}} = i$ 
    \State \textbf{output:} $n_{\textrm{min}}$, $n_{\textrm{max}}$, $p_i$, $\alpha_i$, $i=1,2,\dots,n_{\textrm{max}}$
  \end{algorithmic}
\end{algorithm}

\begin{algorithm}\caption{SP2-ACC algorithm\label{alg:sp2acc}}
  \begin{algorithmic}[1]
      \State \textbf{input:} $F$, $\lambda_\textrm{min}$, $\lambda_\textrm{max}$, $n_{\textrm{min}}$, $n_{\textrm{max}}$, $p_i$, $\alpha_i$, $i=1,2,\dots,n_{\textrm{max}}$
    \State $X_0 = 
    \frac{\lambda_\textrm{max}I - F}
         {\lambda_\textrm{max} - \lambda_\textrm{min}}$
    \State $\widetilde{X}_0 = X_0+E_0$
    \State $e_0 = \|\widetilde{X}_0-\widetilde{X}_0^2\|_2$
    \State $C_2^{\textrm{sp}} =\frac{1}{32} \left(71+17 \sqrt{17}\right)$
    \For{$i = 1,2,\dots, n_{\textrm{max}}$}
    \If{$p_i = 1$}
    \State $X_i = ((1-\alpha_i)I+\alpha_i\widetilde{X}_{i-1})^2$
    \Else
    \State $X_i = 2\alpha_i\widetilde{X}_{i-1}-(\alpha_i\widetilde{X}_{i-1})^2$
    \EndIf
    \State $\widetilde{X}_i = X_i+E_i$
    \State $e_i = \|\widetilde{X}_i-\widetilde{X}_i^2\|_2$
    \If{$i \geq  n_{\textrm{min}}$ \textbf{and} $p_i \neq p_{i-1}$ \textbf{and} $\frac{\log(e_i/C_2^{\textrm{sp}})}{\log(e_{i-2})} < 1.8$}
    \State $n=i$
    \State \textbf{break}    
    \EndIf
    \EndFor
    \State \textbf{output:} $n$, $\widetilde{X}_n$
  \end{algorithmic}
\end{algorithm}

In the following we will make use of Theorem~\ref{Cq_theorem} to
derive the stopping criteria employed in Algorithms~\ref{alg:sp2}
and~\ref{alg:sp2acc}.  We first note that neither of the limits
\begin{align} 
    \lim_{x \rightarrow 1^-} \frac{p_{\textrm{sp}}^{(1)}(x)-p_{\textrm{sp}}^{(1)}(x)^2}{(x-x^2)^2} \quad \textrm{and} \quad
    \lim_{x \rightarrow 0^+} \frac{p_{\textrm{sp}}^{(2)}(x)-p_{\textrm{sp}}^{(2)}(x)^2}{(x-x^2)^2}
\end{align}
exist.

The limits required by Theorem~\ref{Cq_theorem} do exist for
compositions of alternating polynomials $p_{\textrm{sp}}^{(12)}(x) =
p_{\textrm{sp}}^{(1)}(p_{\textrm{sp}}^{(2)}(x))$ and
$p_{\textrm{sp}}^{(21)}(x) =
p_{\textrm{sp}}^{(2)}(p_{\textrm{sp}}^{(1)}(x))$:
\begin{align} 
    \lim_{x \rightarrow 0^+} \frac{p_{\textrm{sp}}^{(21)}(x)-p_{\textrm{sp}}^{(21)}(x)^2}{(x-x^2)^2} = \lim_{x \rightarrow 1^-} \frac{p_{\textrm{sp}}^{(12)}(x)-p_{\textrm{sp}}^{(12)}(x)^2}{(x-x^2)^2} = 2, \\
    \lim_{x \rightarrow 1^-} \frac{p_{\textrm{sp}}^{(21)}(x)-p_{\textrm{sp}}^{(21)}(x)^2}{(x-x^2)^2} = \lim_{x \rightarrow 0^+} \frac{p_{\textrm{sp}}^{(12)}(x)-p_{\textrm{sp}}^{(12)}(x)^2}{(x-x^2)^2} = 4.
\end{align}
However, the limits
\begin{align} 
    \lim_{x \rightarrow 0^+} \frac{p_{\textrm{sp}}^{(22)}(x)-p_{\textrm{sp}}^{(22)}(x)^2}{(x-x^2)^2} \quad \textrm{and} \quad
    \lim_{x \rightarrow 1^-} \frac{p_{\textrm{sp}}^{(11)}(x)-p_{\textrm{sp}}^{(11)}(x)^2}{(x-x^2)^2}
\end{align}
do not exist.
We therefore want to find the smallest value $C_2^{\textrm{sp}}$ such that 
 \begin{equation}
  \frac{\log(e_i/C_2^{\textrm{sp}})}{\log(e_{i-2})} \geq 2,
\end{equation} 
provided that $p_i \neq p_{i-1}$.
Consequently, we  invoke Theorem~\ref{Cq_theorem} with $f = p_{\textrm{sp}}^{(12)}$ and $f = p_{\textrm{sp}}^{(21)}$, $a = 0$, and $q = 2$. Noting that 
\begin{align} 
     \max_{x \in [0, 1]} \frac{p_{\textrm{sp}}^{(12)}(x)-p_{\textrm{sp}}^{(12)}(x)^2}{(x-x^2)^2} 
     &=\max_{x \in [0, 1]} \frac{p_{\textrm{sp}}^{(21)}(x)-p_{\textrm{sp}}^{(21)}(x)^2}{(x-x^2)^2} \\
     &=\frac{1}{32} \left(71+17 \sqrt{17}\right)
\end{align}
we get 
\begin{align} 
    C_2^{\textrm{sp}} =\frac{1}{32} \left(71+17 \sqrt{17}\right) \approx 4.40915
\end{align}
which leads to the stopping criterion of Algorithm~\ref{alg:sp2}.

Theorem~\ref{Cq_theorem} can also be used to derive a stopping
criterion for the accelerated algorithm. However, in order for the
limits~\eqref{eq:limAB} to exist the parameter $a$ should be chosen
larger than 0 and vary throughout the iterations. Since the
acceleration is effective only in the conditioning phase, we have
found it easier to turn off the acceleration when entering the
purification phase and then use the stopping criterion for the regular
expansion.  We turn off the acceleration as soon as the relevant homo
and lumo eigenvalue bounds are close enough to 1 and 0 respectively,
and start to check the stopping criterion in the next iteration
$n_{\textrm{min}}$, see lines
\ref{alg:selection_polynomials_sp2acc_delta_start}--\ref{alg:selection_polynomials_sp2acc_delta_end}
of Algorithm~\ref{alg:selection_polynomials_sp2acc}.

\begin{figure}[!ht]
        \centering
        \includegraphics[width=0.6\textwidth]{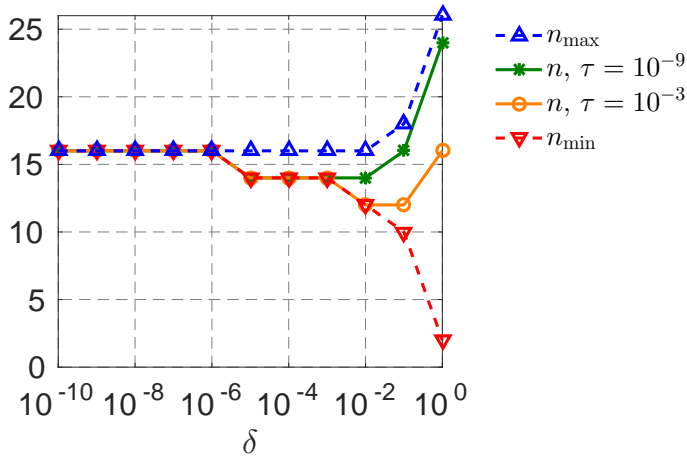}
        \caption{The estimated number of iterations
          $n_{\textrm{max}}$, iteration $n_{\textrm{min}}$ where the
          acceleration has been turned off, and total number of
          iterations $n$ in the SP2-ACC expansion for various values
          of $\delta$. The recursive expansion is applied to the
          matrix $X_0 = \text{diag}(0.48, 0.52)$ perturbed in each
          iteration by a diagonal matrix with random elements from a
          normal distribution and with spectral norm $\tau$.
        }
        \label{fig:optimal_delta}
\end{figure}
There is no need for accurate detection of the transition between the
conditioning and purification phases---the parameter $\delta$ in the
algorithm is to some extent arbitrary.
A larger $\delta$-value results in less acceleration. A smaller value
means that we will start to check the stopping criterion later
possibly resulting in superfluous iterations, particularly in low
accuracy calculations, see Figure~\ref{fig:optimal_delta}. By
numerical experiments we have found $\delta = 0.01$ to be an
appropriate value.  For values smaller than $0.01$ the effect of the
acceleration is less than 1 percent compared to the regular iteration,
see Figure~\ref{fig:acc_efficiency}.
\begin{figure}[!ht]
\centering
\includegraphics[width=0.5\textwidth]{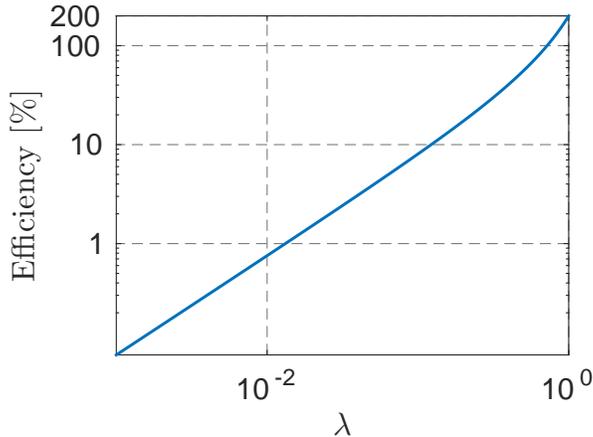}
\caption{ Efficiency of the accelerated SP2-ACC scheme relative to the
  regular SP2 scheme. The figure shows how much more an unoccupied
  eigenvalue $\lambda$ is reduced by the polynomial $(1-\alpha +
  \alpha \lambda)^2$ with $\alpha=2/(2-\lambda)$ compared to the
  polynomial $\lambda^2$, in the relative sense,
  i.e. $|\lambda^2-(1-\alpha + \alpha
  \lambda)^2|/|\lambda-\lambda^2|$. The corresponding (identical)
  figure for an occupied eigenvalue can be constructed in the same
  way.  Clearly the acceleration has almost no effect in the
  purification phase when all eigenvalues are close to their desired
  values of~0 or~1. The plot is in log-log scale.
  \label{fig:acc_efficiency}}
\end{figure}

\subsection{Estimation of the order of convergence}
As for the McWeeny expansion one may want to use the Frobenius norm instead of the spectral norm to measure the idempotency error.
However, for the SP2 expansion, relation \eqref{eq:K_ratio} translates to
\begin{equation} \label{eq:K_ratio_SP2}
  K_{i-2}^2/K_i \geq 1, \quad \textrm{if } p_i \neq p_{i-1} 
\end{equation}
given second order convergence and the application of
Theorem~\ref{Cq_theorem} for compositions of alternating polynomials
from two iterations.  We have not been able to prove that
\eqref{eq:K_ratio_SP2} always holds. However, we have also not been
able to find any counterexample where \eqref{eq:K_ratio_SP2} does not
hold in exact arithmetics.

We have encountered cases when \eqref{eq:K_ratio_SP2} does not hold
due to numerical errors. However, the use of the Frobenius norm has
not resulted in too early stops in those cases.  An example is given
in Figure~\ref{fig:dense_1000_500} where we applied the recursive
expansion to a random symmetric dense matrix. The occupied and
unoccupied eigenvalues were distributed equidistantly in $[0,
  \ 0.495]$ and $[0.505, \ 1]$, respectively. The eigenvectors of the
matrix were taken from a QR factorization of a matrix with random
elements from a normal distribution.
The use of the Frobenius norm in 
the stopping criterion results in a stop in iteration 29 while the use
of the spectral norm results in a stop in iteration 31. However, being
clear from panels (c) and (d), the stagnation phase started already in
iteration 29.  Although \eqref{eq:K_ratio_SP2} does not hold, the use
of the Frobenius norm does not result in a too early stop in this
case.

\begin{figure}[ht!]
        \centering
        \captionsetup[subfigure]{justification=centering}
        \begin{subfigure}[b]{0.45\textwidth}
                \includegraphics[width=\textwidth]{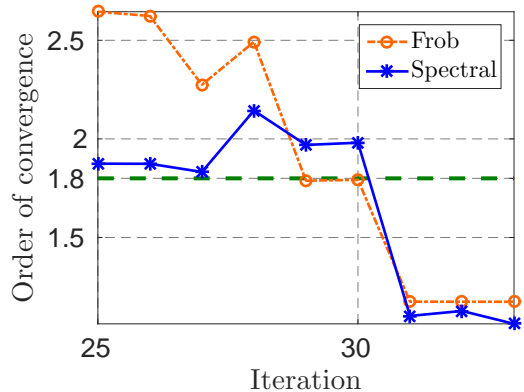}
                \caption{Observed and estimated orders of convergence}
                \label{fig:dense_1000_500_orders}
        \end{subfigure}
        \begin{subfigure}[b]{0.45\textwidth}
                \includegraphics[width=\textwidth]{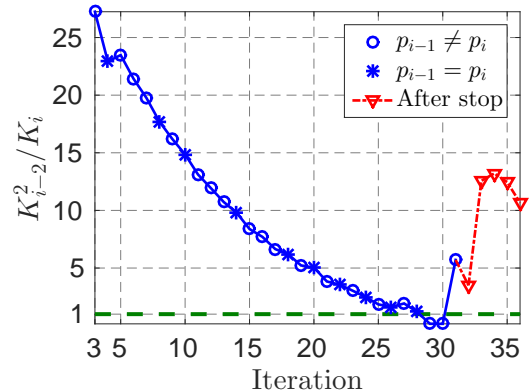}
                \caption{${K_{i-2}^{2}}/{K_{i}}$\\ \quad} 
                \label{fig:dense_1000_500_Kratio}
        \end{subfigure}

        \begin{subfigure}[b]{0.45\textwidth}
		\includegraphics[width=\textwidth]{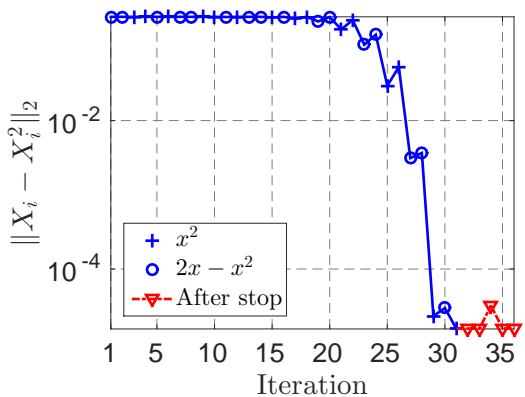}
                \caption{Idempotency error using \\spectral norm}
                \label{fig:dense_1000_500_sp_norm}
        \end{subfigure}
        \begin{subfigure}[b]{0.45\textwidth}
		\includegraphics[width=\textwidth]{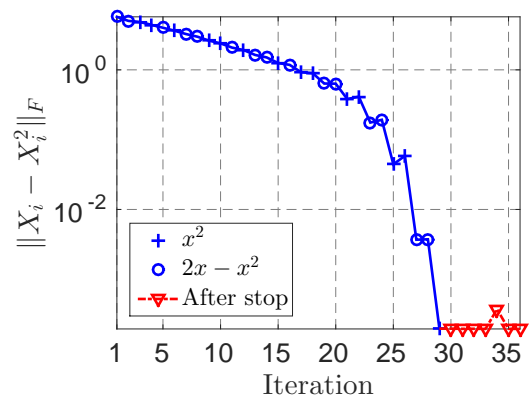}
                \caption{Idempotency error using \\Frobenius norm}
                \label{fig:dense_1000_500_frob_norm}
        \end{subfigure}
        \caption{Example illustrating a special case when use of the
          Frobenius norm in the stopping criterion results in an
          earlier stop than use of the spectral norm. The regular SP2
          expansion is applied to a random symmetric dense $1000 \times 1000$
          matrix with all eigenvalues in $[0,\ 1]$, homo-lumo gap 0.01
          located symmetrically around 0.5, and occupation number
          500. In each iteration $i$ the matrix $X_i$ is perturbed by
          a random symmetric matrix with elements from a normal
          distribution and with spectral norm $\e{-5}$.
}
        \label{fig:dense_1000_500}
\end{figure}

\subsection{Number of subsequent iterations with the same polynomial}
\label{sec:num_iter}

The stopping criteria in Algorithms~\ref{alg:sp2} and~\ref{alg:sp2acc}
include a condition of alternating polynomials, i.e. that $p_{i} \neq
p_{i-1}$ in iteration $i$.  When the polynomials in each iteration are
chosen based on the trace of the matrix as in Algorithm~\ref{alg:sp2},
it is possible to construct examples where the same polynomial appears
in an arbitrary number of subsequent iterations.  In particular, you
will get a large number of consecutive iterations with the same
polynomial if there are many eigenvalues clustered very close to
either the homo or the lumo eigenvalue. However, besides artificially
constructed matrices we have not come across Fock or Kohn--Sham
matrices that give more than a few subsequent iterations with the same
polynomial.  When the polynomials are chosen based on the location of
the homo and lumo eigenvalues as in
Algorithm~\ref{alg:selection_polynomials_sp2acc}, the number of
subsequent iterations is strictly bounded.  When
Algorithm~\ref{alg:selection_polynomials_sp2acc} is used without
acceleration there can after an initial startup phase be at most two
subsequent iterations with the same polynomial, as shown by the
following theorem.

\begin{theorem} \label{tm:num_iter_sp2}
  In Algorithm~\ref{alg:selection_polynomials_sp2acc}, let
  $\beta_\upperBound \neq 0$, $\gamma_\upperBound\neq 0$,
  $\beta_\lowerBound = 0$, and $\gamma_\lowerBound = 0$.  Assume that
  $p_{i} \neq p_{i-1}$. Then, if $p_{i+1} = p_{i}$ it follows that $p_{i+2} \neq
  p_{i+1}$.
\end{theorem}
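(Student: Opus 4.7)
The plan is to reduce the claim to a single algebraic inequality in two real variables and then prove that inequality by a convenient substitution. Because Algorithm~\ref{alg:selection_polynomials_sp2acc} is symmetric under the simultaneous exchange $\beta_\upperBound \leftrightarrow \gamma_\upperBound$ together with $p=0 \leftrightarrow p=1$---the two update branches swap consistently, as does the selection criterion $\gamma_\upperBound \geq \beta_\upperBound$---it suffices to treat the case $p_{i-1}=0$, $p_i=p_{i+1}=1$ and show $p_{i+2}=0$. Since $\beta_\lowerBound=\gamma_\lowerBound=0$ throughout these iterations, every acceleration factor $\alpha_j$ equals $1$, so when $p_j=1$ the updates reduce to $\gamma_\upperBound \mapsto \gamma_\upperBound^2$ and $\beta_\upperBound \mapsto 2\beta_\upperBound-\beta_\upperBound^2$, with the roles of $\beta_\upperBound$ and $\gamma_\upperBound$ swapped when $p_j=0$.

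Let $a$ and $b$ denote the values of $\gamma_\upperBound$ and $\beta_\upperBound$ at the end of iteration $i-1$. Two successive applications of the $p=1$ rule yield $\gamma_\upperBound^{(i+1)} = a^4$ and $\beta_\upperBound^{(i+1)} = 1-(1-b)^4$, so the desired conclusion $p_{i+2}=0$ is equivalent to
\[
a^4 + (1-b)^4 < 1.
\]
The hypotheses translate into constraints on $(a,b)$. The most informative one comes from $p_{i-1}=0$: inverting the $p=0$ step gives $\gamma_\upperBound^{(i-2)} = 1-\sqrt{1-a}$ and $\beta_\upperBound^{(i-2)}=\sqrt{b}$, so the strict inequality $\gamma_\upperBound^{(i-2)} < \beta_\upperBound^{(i-2)}$ becomes $\sqrt{1-a}+\sqrt{b}>1$, or equivalently $a < 2\sqrt{b}-b$. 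The conditions $p_i=1$ and $p_{i+1}=1$ give the weaker bounds $a \geq b$ and $a^2 \geq 2b-b^2$. The boundary value $b=0$ (which would force $a<0$) and $b=1$ (which combined with $a \geq b$ would contradict $a < 2\sqrt{b}-b=1$) are ruled out, so $b \in (0,1)$.

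It then remains to prove
\[
(2\sqrt{b}-b)^4 + (1-b)^4 \leq 1 \quad\text{for } b \in [0,1],
\]
with strict inequality on $(0,1)$. Setting $b=t^2$ with $t\in[0,1]$ turns the two bases into $A := 1-t^2$ and $B := t(2-t) = 1-(1-t)^2$. Introducing the symmetric variable $r := t(1-t) \in [0,\tfrac14]$, a short computation gives $A+B = 1+2r$ and $AB = r(2+r)$, whence
\[
A^4+B^4 = (A^2+B^2)^2-2(AB)^2 = (1+2r^2)^2 - 2r^2(2+r)^2 = 1-4r^2-8r^3+2r^4.
\]
The claim $A^4+B^4 \leq 1$ thus reduces to $r^4 \leq 2r^2+4r^3$, which for $r\in(0,\tfrac14]$ is immediate by dividing by $r^2$ to get $r^2 \leq 2+4r$. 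Equality holds only at $r=0$, which corresponds precisely to the excluded boundary $b\in\{0,1\}$. Combined with the strict inequality $a<2\sqrt{b}-b$, this yields $a^4+(1-b)^4<1$.

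The main obstacle is spotting the parametrization in the last paragraph: both $2\sqrt{b}-b$ and $1-b$ are of the form $1-(\cdot)^2$, and once one sees that $r = \sqrt{b}(1-\sqrt{b})$ is the natural symmetric variable, the compound polynomial inequality collapses to a trivially true one. The remaining work---the symmetry reduction, the translation of the three selection conditions into constraints on $(a,b)$, and the boundary check---is bookkeeping.
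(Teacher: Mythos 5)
Your proof is correct. The paper argues forward from the state before the first of the three relevant iterations: assuming $p_1=1$ it reduces to the worst case $\beta_0=\gamma_0$ (asserting, without a detailed argument, that this maximizes the length of the subsequent run of iterations with $p_k=0$), then iterates three times explicitly and verifies $\beta_3<\gamma_3$ by expanding and factoring a degree-eight polynomial in $\gamma_0$. You instead work from the state $(a,b)$ that decides $p_i$, invert the preceding step to convert the alternation hypothesis $p_{i-1}\neq p_i$ into the explicit constraint $a<2\sqrt{b}-b$, and are then left with the two-variable inequality $a^4+(1-b)^4<1$, which you reduce to the one-variable inequality $(2\sqrt{b}-b)^4+(1-b)^4\le 1$ and dispatch with the substitution $r=t(1-t)$. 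The extremal case $a=2\sqrt{b}-b$ is exactly the paper's worst case $\gamma^{(i-2)}=\beta^{(i-2)}$, so both proofs ultimately rest on the same polynomial inequality; but your route replaces the paper's informal worst-case/monotonicity step with an airtight bound (monotonicity of $x\mapsto x^4$ on $[0,1]$), and the $r$-substitution is cleaner than brute-force factoring and makes transparent that equality occurs only at the degenerate endpoints $b\in\{0,1\}$, which you correctly exclude (in fact $b=1$ is harmless anyway, since then $(1-b)^4=0$ and $a<1$ already gives the strict inequality). The one point worth flagging is the symmetry reduction: the tie-breaking rule ($p=1$ when $\gamma_\upperBound=\beta_\upperBound$) is not invariant under the swap $\beta_\upperBound\leftrightarrow\gamma_\upperBound$, $p=0\leftrightarrow p=1$. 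This turns out to be harmless because in the mirrored case ($p_{i-1}=1$, $p_i=p_{i+1}=0$) the hypothesis yields only the non-strict constraint $a\le 2\sqrt{b}-b$, while the desired conclusion $p_{i+2}=1$ likewise needs only the non-strict inequality $a^4+(1-b)^4\le 1$, so the strictness bookkeeping matches in both directions---but a sentence acknowledging this would make the reduction fully rigorous.
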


\begin{proof}

We use here the notation
\begin{align}
 \beta_{i} \coloneqq \beta_\upperBound \quad \textrm{and} \quad 
 \gamma_{i} \coloneqq \gamma_\upperBound
\end{align}
in every iteration $i$ of the recursive expansion.  Without loss of
generality we consider the case $i=2$. Assume that $p_{1} = 1$.  Then,
$\beta_{0} \leq \gamma_{0}$ and the largest possible number of
subsequent iterations with $p_k = 0, \ k \geq 2$ is obtained with
$\beta_{0} = \gamma_{0}$. Then, following
Algorithm~\ref{alg:selection_polynomials_sp2acc} we have that
\begin{align}
     \gamma_{1}  & = \gamma_{0}^2, \\
    \beta_{1} & = 2\beta_{0} - \beta_{0}^2  =2\gamma_{0} - \gamma_{0}^2.
  \end{align}
  Since $\beta_{1} > \gamma_{1}$, $p_2 = 0$ and
  \begin{align}
    \gamma_{2}  & = 2 \gamma_{1} - \gamma_{1}^2 = 2 \gamma_{0}^2 - \gamma_{0}^4, \\
    \beta_{2} & = \beta_{1}^2 = (2\gamma_{0} - \gamma_{0}^2)^2 = 2 \gamma_{0}^2 - \gamma_{0}^4 + 2\gamma_{0}^2(\gamma_{0}-1)^2.
  \end{align}
  Then, since $2\gamma_{0}^2(\gamma_{0}-1)^2 > 0$, we have that $\beta_{2} >
  \gamma_{2}$ and $p_3 = 0$. Therefore
  \begin{align}
    \gamma_{3} & = 2 \gamma_{2} - \gamma_{2}^2 = 2(2 \gamma_{0}^2 -
    \gamma_{0}^4) - (2 \gamma_{0}^2 - \gamma_{0}^4)^2 
    = 4\gamma_{0}^2 - 6\gamma_{0}^4 + 4\gamma_{0}^6 - \gamma_{0}^8,
\\
    \beta_{3} & = \beta_{2}^2 = (2\gamma_{0} - \gamma_{0}^2)^4 \nonumber \\
    & =
    \underbrace{4\gamma_{0}^2 - 6\gamma_{0}^4 + 4\gamma_{0}^6 -
      \gamma_{0}^8}_{\gamma_3}
    \underbrace{-4\gamma_{0}^2(2-11\gamma_{0}^2+16\gamma_{0}^3-10\gamma_{0}^4+4\gamma_{0}^5-\gamma_{0}^6)}_{<0}.
  \end{align}
  Thus, $\beta_{3} < \gamma_{3}$ and $p_4 = 1$. The case with $p_1 =
  0$ can be shown similarly.
  \qed
\end{proof}

Note that Theorem~\ref{tm:num_iter_sp2} does not exclude the
possibility of a large number of initial iterations with the same
polynomial. However, as soon as each of the two polynomials has been
used at least once, there will not be more than two subsequent
iterations with the same polynomial.  For
Algorithm~\ref{alg:selection_polynomials_sp2acc} with acceleration, it
is possible to show that there cannot be more than three subsequent
iterations with the same polynomial.

\section{Numerical experiments}
\label{sec:numerical_experiments}
This section provides numerical illustrations of the proposed stopping
criteria showing that they work well for dense and sparse matrices
regardless of what method is used to select matrix elements for
removal. All tests in this section are performed in Matlab R2015b. We
will here use the regular SP2 expansion without acceleration. Similar
results can be shown for expansions with other polynomials, such as
McWeeny or accelerated SP2.  It will be assumed that non-overlapping
intervals containing the homo and lumo eigenvalues, respectively, are
known before the start of the expansion so that
Algorithm~\ref{alg:selection_polynomials_sp2acc} can be used for
selection of polynomials and Algorithm~\ref{alg:sp2acc} for the
expansion. To get a regular (nonaccelerated) expansion the parameters
$\beta_\lowerBound$ and $\gamma_\lowerBound$ are both set to 0 on
lines \ref{alg:selection_polynomials_sp2acc_beta_low} and
\ref{alg:selection_polynomials_sp2acc_gamma_low} of
Algorithm~\ref{alg:selection_polynomials_sp2acc} giving scaling
parameters $\alpha_i, i = 1,2,\dots$ on lines
\ref{alg:selection_polynomials_sp2acc_alpha1} and
\ref{alg:selection_polynomials_sp2acc_alpha2} equal to 1 and
$n_{\textrm{min}}$ equal to 2.

In our first test we apply the SP2 expansion to a random symmetric
dense matrix, see Figure~\ref{fig:idemp_error_random_dense}. The
eigenvectors of the matrix were taken from a QR factorization of a
matrix with random elements from a normal distribution.  The occupied
and unoccupied eigenvalues were distributed equidistantly in $[0, \
  0.49]$ and $[0.51, \ 1]$, respectively. In each iteration, the matrix
$X_i$ was perturbed by a random symmetric matrix with spectral norm
$\tau$ and elements from a normal distribution.  The spectral norm was
used to compute the observed order of convergence used in the stopping
criterion.  The figure shows that the stopping criterion accurately
detects when numerical errors start to dominate the calculation and
prevent any further improvement of the eigenvalues.

\begin{figure}
        \centering
         \includegraphics[width=0.55\textwidth]{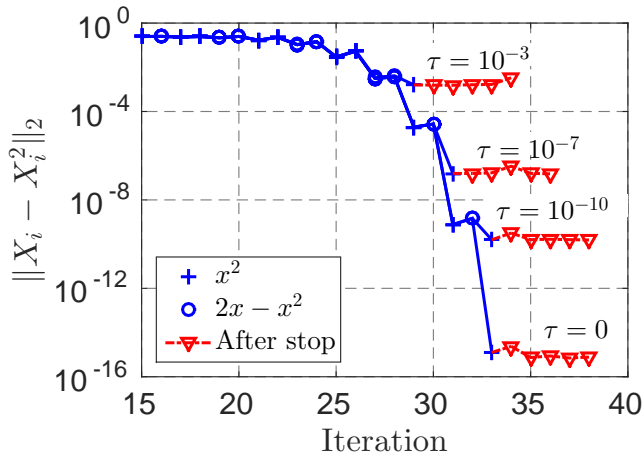}
        \caption{Idempotency error in each iteration of the SP2
          expansion for a random symmetric dense $200 \times 200$
          matrix with spectrum in $[0, \ 1]$ and homo-lumo gap 0.02 located
          around 0.5 and occupation number 100. In each iteration the
          matrix was perturbed by a random symmetric matrix with
          spectral norm $\tau$.  The idempotency error is shown for 4
          different values of $\tau$, before (blue circles and plus
          signs) and after (red triangles) the stop.
          \label{fig:idemp_error_random_dense}}
\end{figure}

We show in Figure~\ref{fig:num_test_idemp_error_trunc} that the
stopping criteria work well for two different approaches to select
matrix elements for removal in a linear scaling sparse matrix
setting. The Fock matrix comes from a converged spin-restricted
Hartree--Fock calculation for a linear alkane molecule
C$_{160}$H$_{322}$ using the standard Gaussian basis set STO-3G,
giving a total of 1122 basis functions and 641 occupied orbitals. In
this case the eigenvalue problem \eqref{eq:eig_problem} is on
generalized form
\begin{equation}
  Fx_i = \lambda_i Sx_i,
\end{equation}
where $S$ is the basis set overlap matrix. A congruence transformation
employing the inverse Cholesky factor of $S$ is used to get the
eigenvalue problem on standard form as required by the algorithms
described in this article. There are three different common approaches for
removal of matrix elements. Each matrix element in a Fock or
density matrix usually corresponds to the distance between two atomic
nuclei.  In cutoff radius based truncation, all elements corresponding
to distances larger than a predefined cutoff radius are removed.
However, in methods employing the congruence transformation this
approach cannot be straightforwardly applied.  In element magnitude
based truncation, all elements with absolute value smaller than a
predefined threshold value are removed.
The threshold value is typically chosen based on practical experience
without being directly linked to the actual error in the final result,
the density matrix.  However, a surprisingly simple relationship
between the truncation and the error in the final result was developed
by Rubensson \etal~\cite{m-accPuri}, allowing us to control the
forward error $\|D-\widetilde{X}_n\|_2$. The forward error is split in
two parts, the error in eigenvalues (idempotency error) and the error
in the occupied subspace. The subspace error is due to numerical
errors and the eigenvalue error is due to numerical errors and a
finite number of iterations.
Figure~\ref{fig:num_test_idemp_error_trunc_magnitude} shows that our
stopping criteria work well together with truncation based on matrix
element magnitude.
Figure~\ref{fig:num_test_idemp_error_trunc_subspace} shows that our
stopping criteria work well when matrix elements are removed with
control of the subspace error~\cite{m-accPuri}.

\begin{figure}[ht!]
        \centering
        \captionsetup[subfigure]{justification=centering}
	\begin{subfigure}[b]{0.45\textwidth}
	  \includegraphics[width=\textwidth]{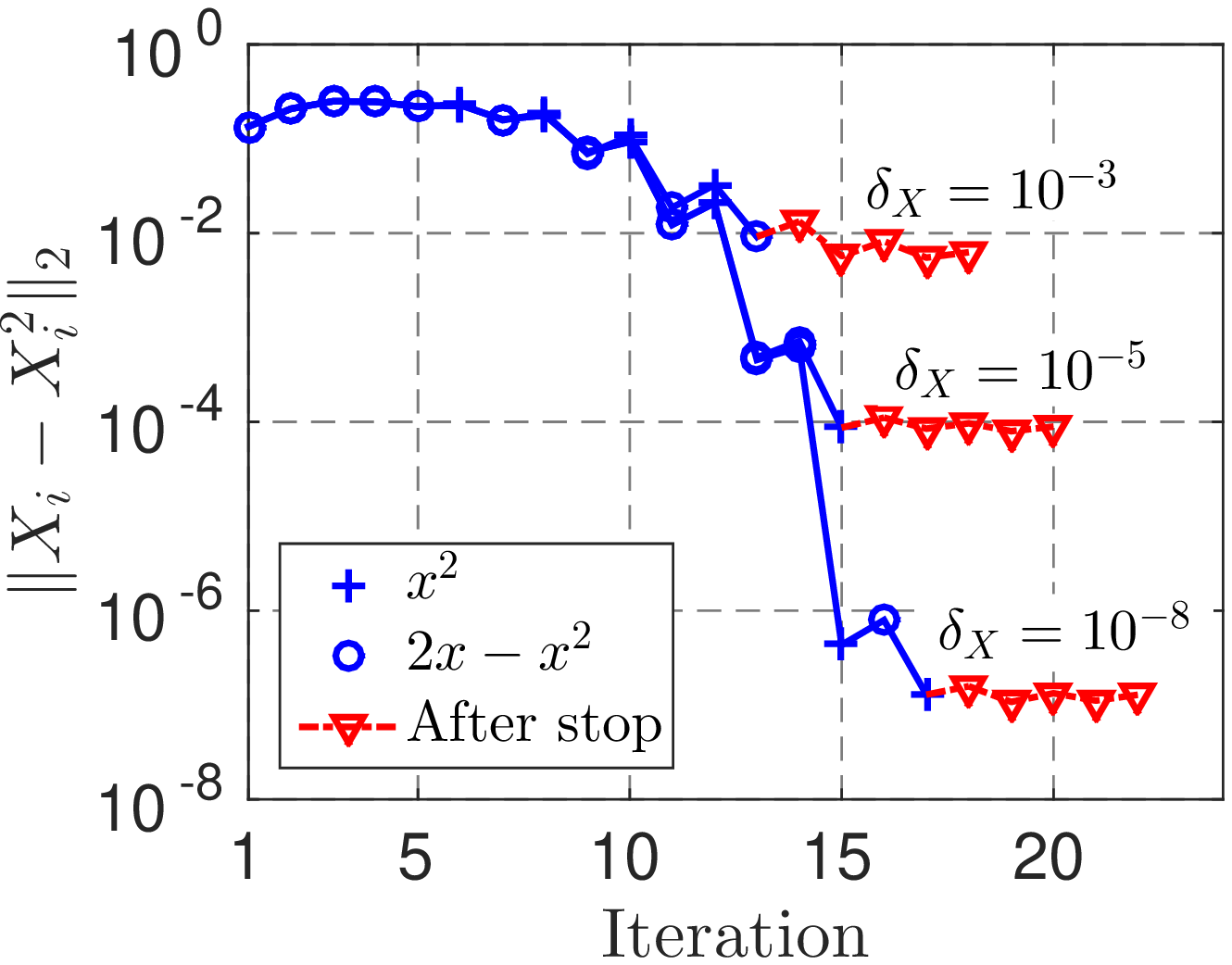}
	  \caption{Truncation based on the \\magnitude relation \label{fig:num_test_idemp_error_trunc_magnitude}}
	\end{subfigure}
	\begin{subfigure}[b]{0.45\textwidth}
	  \includegraphics[width=\textwidth]{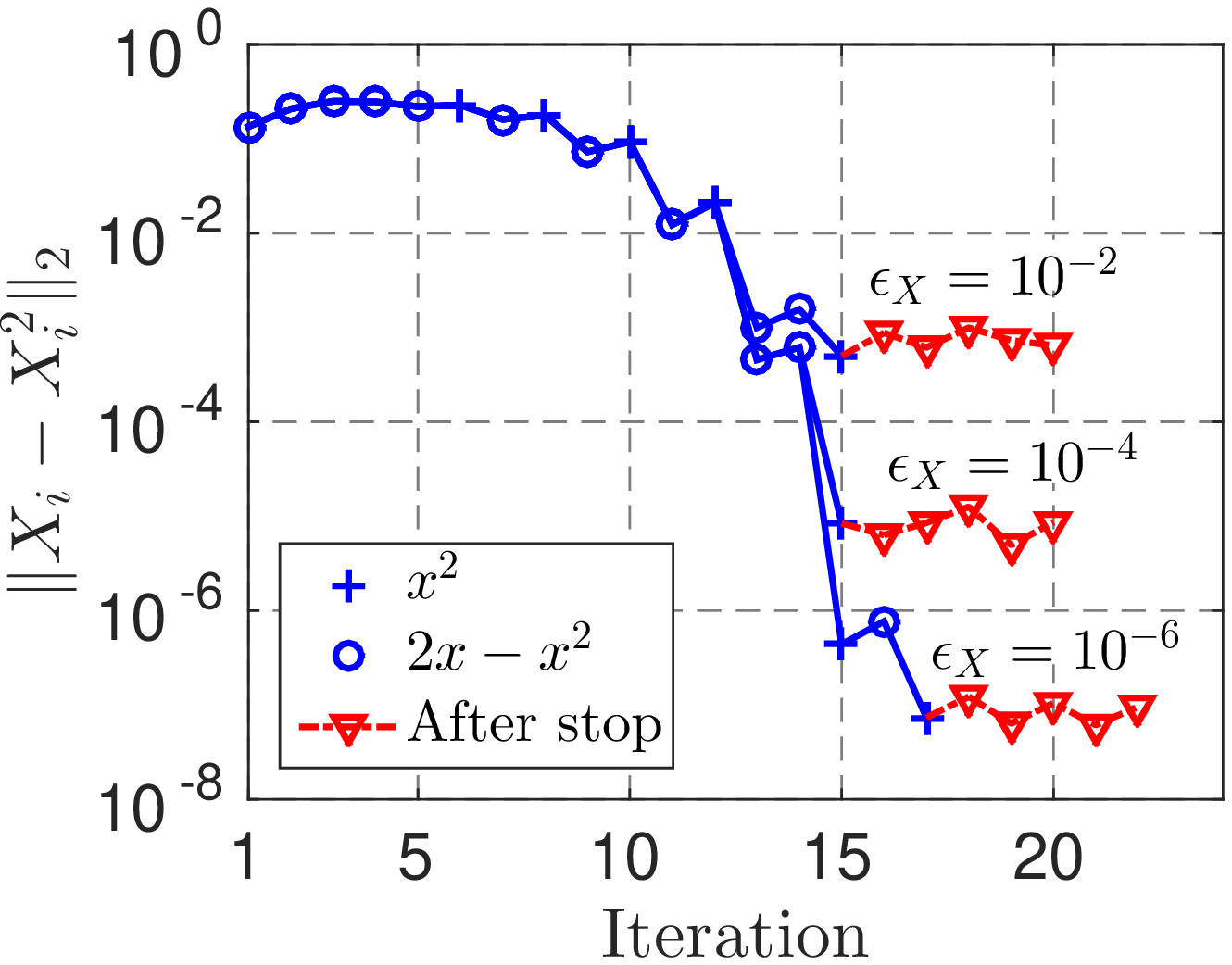}
	  \caption{Truncation based on the \\spectral norm of the error matrix \label{fig:num_test_idemp_error_trunc_subspace}}
	\end{subfigure}
        \caption{Demonstration of the stopping criterion with
          different ways of selecting matrix elements for removal in a
          sparse matrix setting.  The idempotency error in each
          iteration of the SP2 expansion is shown for a Fock 
          matrix coming from a Hartree--Fock calculation
          on a linear alkane using a standard Gaussian basis set.
          Panel~(a): Magnitude based truncation with threshold value
          $\delta_X$ for removing small matrix elements.  Panel~(b):
          Removal of elements with control of the error in the
          occupied subspace with the requirement that $\sin \theta <
          \epsilon_X$ where $\theta$ is the largest canonical angle
          between the exact and perturbed subspaces.
        }
        \label{fig:num_test_idemp_error_trunc}
\end{figure}

We will now investigate how the stopping criterion works when the
observed order is estimated using the Frobenius and mixed matrix
norms. We apply the SP2 expansion to a diagonal matrix of dimension
$\e{8} \times \e{8}$ with occupation number $\e{8}/2$, homo-lumo gap
0.1 located symmetrically around 0.5 and with otherwise equidistant
eigenvalues in $[0, \ 1]$.  The idempotency errors in each iteration
measured by the Frobenius, mixed, and spectral norms are shown in
Figure~\ref{fig:num_test_diag_matrices}.  The block size for the mixed
norm is 1000.  As anticipated in Section~\ref{sec:paramless}, $\|X_i -
X_i^2\|_F$ never goes below~1 for such a large system and
\eqref{eq:estim_order_using_frob_norm} cannot be used to estimate the
observed order. If the spectral norm is expensive to compute, the
mixed norm may be used in such cases.

\begin{figure}[ht!]
        \centering
	  \includegraphics[width=0.5\textwidth]{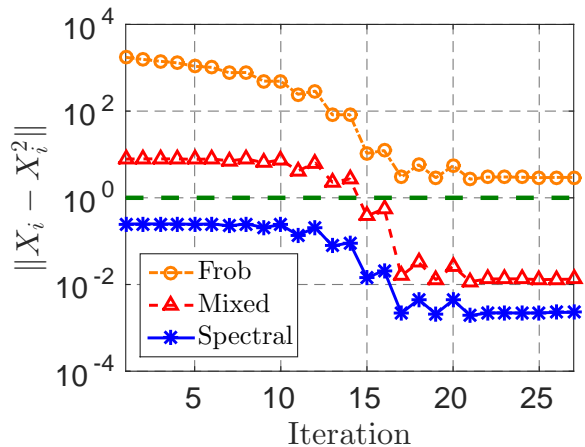} 
        \caption{Frobenius, mixed, and spectral norms of $X_i-X_i^2$
          in every iteration $i$ of the recursive SP2 expansion
          applied to a diagonal $\e{8} \times \e{8}$ matrix with
          occupation number $\e{8}/2$, homo-lumo gap 0.1 located
          symmetrically around 0.5, and otherwise equidistantly
          distributed eigenvalues in $[0,\ 1]$. In each iteration the
          matrix was perturbed by a diagonal matrix with random
          elements from a normal distribution and with spectral norm
          $\e{-3}$. The block size for the mixed norm is 1000.}
        \label{fig:num_test_diag_matrices}
\end{figure}


\section{Application to self-consistent field calculations}\label{sec:scf_calc}
In this section we use the developed stopping criteria in the regular
and accelerated SP2 expansions in self-consistent field (SCF)
calculations with the quantum chemistry program {\sc
  Ergo}~\cite{ergo_web,Ergo2011}. We have performed spin-restricted
Hartree--Fock calculations on a cluster of 4158 water molecules using
the standard Gaussian basis set 3-21G, giving a total of 54054 basis
functions. As initial guess we used the result of a calculation with a
smaller basis set, STO-3G. We used direct inversion in the iterative
subspace (DIIS) for convergence acceleration~\cite{pulay_1980,pulay82}
and stopped the iterations as soon as the largest absolute element of
$FDS-SDF$ was smaller than $\e{-3}$. The hierarchical matrix library
~\cite{hierarchical_matrix_lib} was used for sparse matrix
operations. A block size of 32 was used at the lowest level in the
sparse hierarchical representation. The mixed norm with block size 32
was used both in the stopping criterion and for removal of small
matrix elements~\cite{mixedNormTrunc} with a tolerance of $10^{-3}$
for the error in the occupied subspace measured by the largest
canonical angle between the exact and approximate
subspaces~\cite{m-accPuri}.

The tests were running on the Tintin cluster at the UPPMAX computer
center in Uppsala University using the gcc 5.3.0 compiler and the
OpenBLAS~\cite{openblas} library was used for matrix operations at the
lowest level in the sparse hierarchical representation. Each node on
Tintin is a dual AMD Bulldozer compute server with two 8-core Opteron
6220 processors running at 3.0 GHz. The calculations presented here
are performed on a node with 128 GB of memory.

In each SCF cycle we compute upper and lower bounds of the homo and
lumo eigenvalues and propagate them to the next SCF cycle.  It was
shown by Rubensson and Zahedi~\cite{interior_eigenvalues_2008} that
eigenvalues around the homo-lumo gap can be computed efficiently by
making use of the ability of the recursive expansion to give large
separation between interior eigenvalues. In that work, the Lanczos
method was used to extract the desired information.  Here we use a
recent approach to compute accurate homo and lumo bounds that only
requires the evaluation of Frobenius norms and traces during the
course of the recursive expansion~\cite{interior_eigenvalues_2014}.
Intervals containing the homo and lumo eigenvalues are propagated
between SCF cycles using Weyl's theorem for eigenvalue
movement~\cite{interior_eigenvalues_2014,m-accPuri}.  The inner bounds
for the homo and lumo eigenvalues are used both for the determination
of polynomials in Algorithm~\ref{alg:selection_polynomials_sp2acc} and
for the error control. The outer bounds are used for the acceleration
in Algorithm~\ref{alg:selection_polynomials_sp2acc}. When inner bounds
for homo and lumo are not known or are not accurate we fall back to
the trace-correcting SP2 expansion described in
Algorithm~\ref{alg:sp2}. However, even if the outer bounds are loose
Algorithm~\ref{alg:selection_polynomials_sp2acc} can be used with a
modification that the polynomials are determined on the fly using the
condition in Line~\ref{line:cond_sp2} in Algorithm~\ref{alg:sp2}.

The regular and accelerated SP2 expansions are compared in
Figure~\ref{fig:water_SCF_acc_noacc}. 
\begin{figure}[ht!]
	\captionsetup[subfigure]{justification=centering}
        \centering
        \begin{subfigure}[t]{.45\textwidth}
        \centering
	  \includegraphics[width=\textwidth]{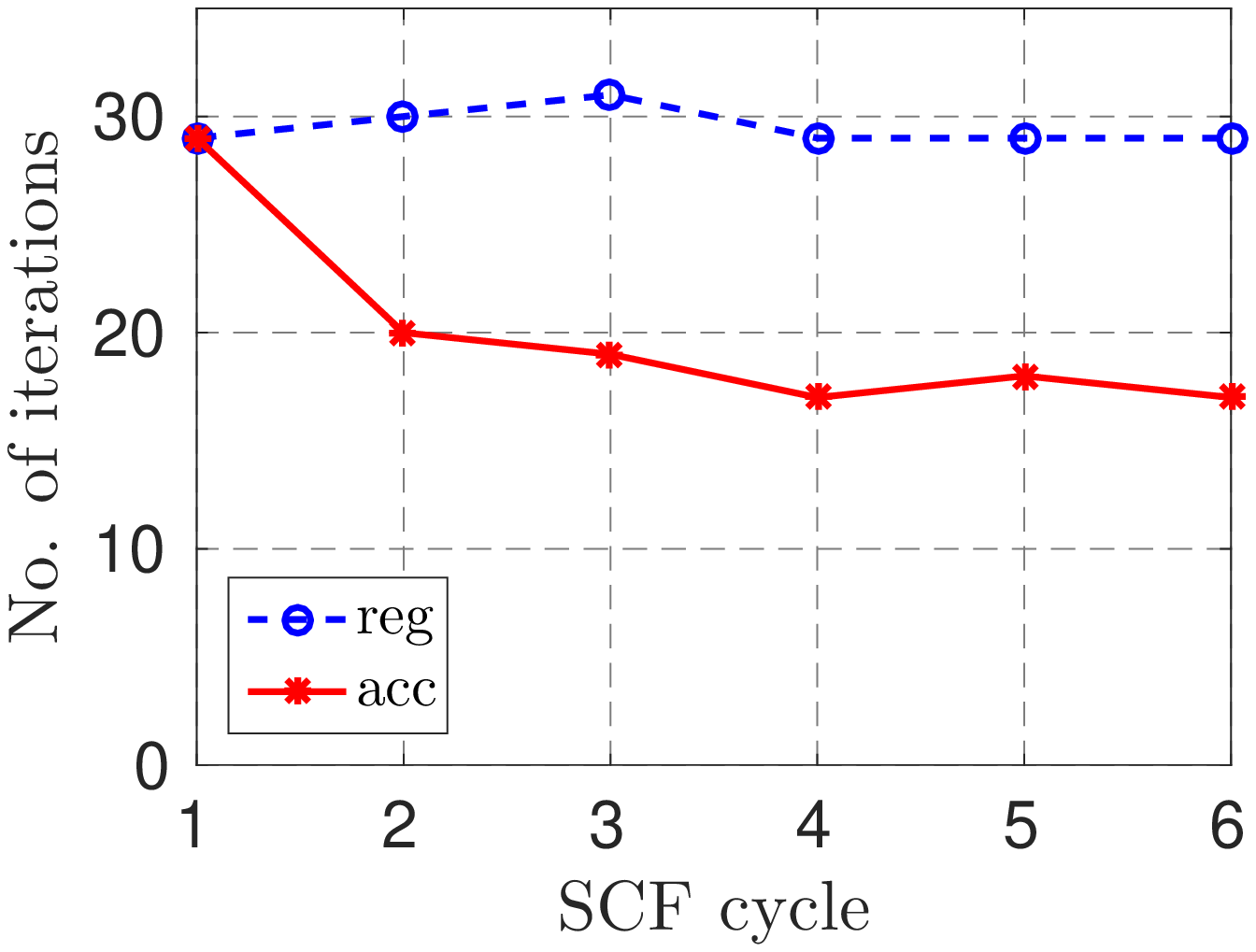}
	  \caption{Number of iterations in the recursive expansion}
	\end{subfigure}
	\begin{subfigure}[t]{.45\textwidth}
	\centering
	  \includegraphics[width=\textwidth]{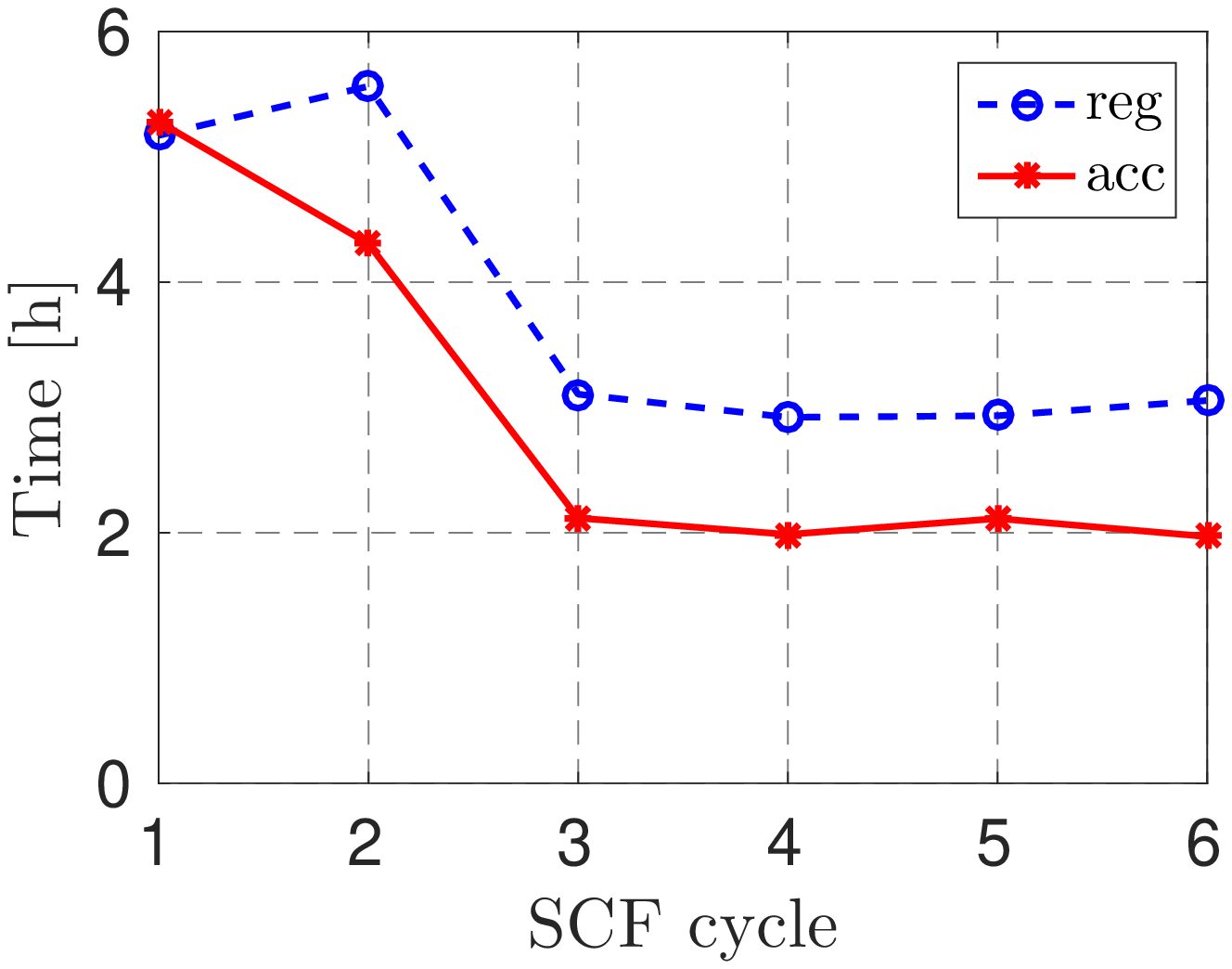}
	  \caption{Recursive expansion wall time}
	\end{subfigure}
      \hfill
	\begin{subfigure}[t]{0.45\textwidth}
	\centering
	  \includegraphics[width=\textwidth]{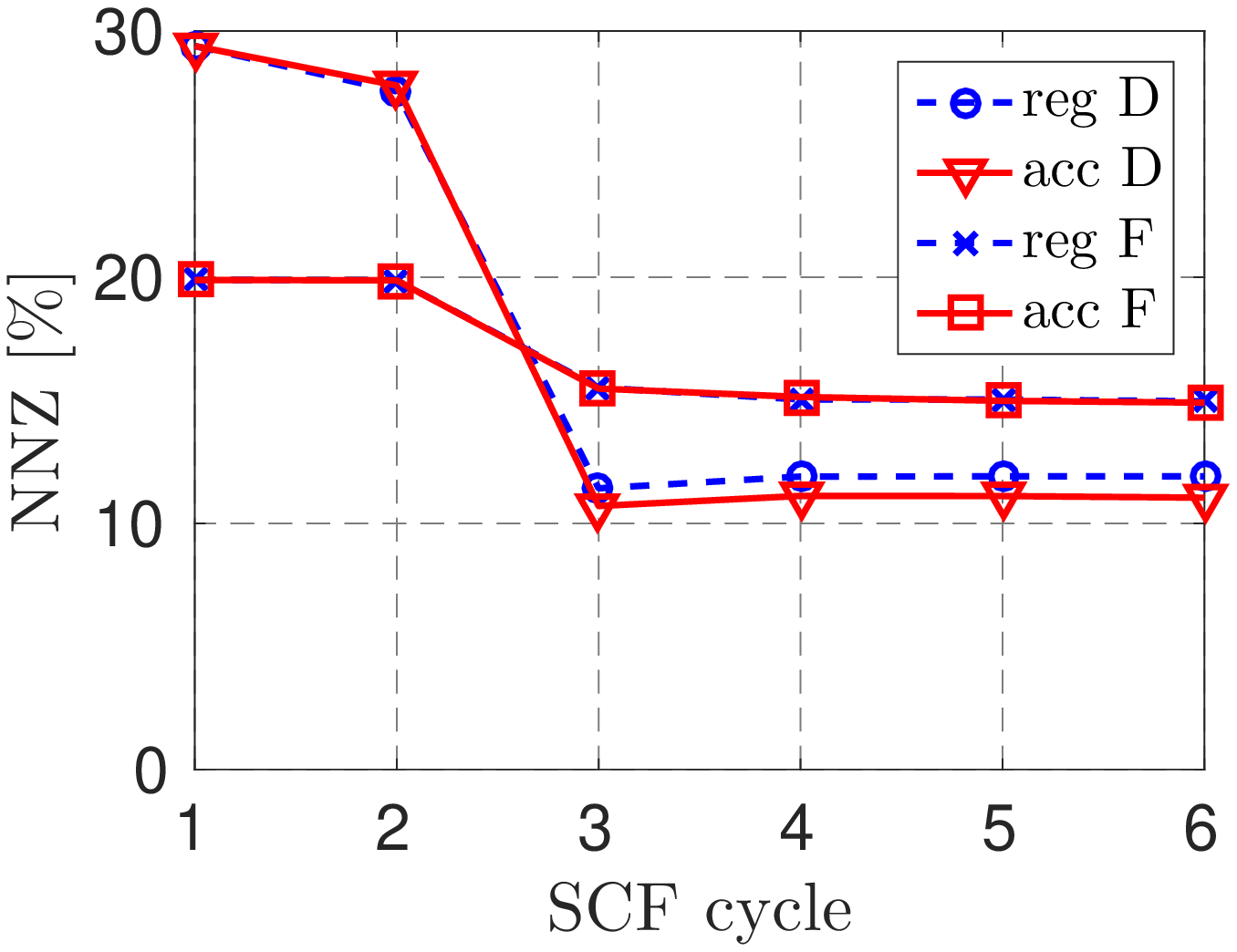}
	  \caption{Number of non-zeros}
	\end{subfigure}
        \caption{Comparison of the regular and accelerated SP2
          expansions in a self-consistent field Hartree--Fock
          calculation for a water cluster with 4158 water
          molecules. Panels~(a) and~(b) show the number of iterations
          and wall time of the recursive expansion in each
          self-consistent field cycle. Panel~(c) gives the
          percentage of non-zero elements in the Fock and density
          matrices, $F$ and $D$ respectively, in orthogonal basis. See
          the text for more details.}
        \label{fig:water_SCF_acc_noacc}
\end{figure}
Panels~(a) and~(b) show the
number of iterations and wall time, respectively, for the regular and
accelerated expansions in each SCF cycle. Panel~(c) shows the
percentage of non-zero elements in the Fock and density matrices in
orthogonal basis in each SCF cycle. Matrices were transformed from
non-orthogonal basis using the inverse Cholesky factor as in the
previous section. The number of non-zeros and the idempotency error
for the regular and accelerated SP2 expansions in the last SCF cycle
are shown in Figure~\ref{fig:water_SCF_acc_noacc_last_SCF_cycle}.  In
this case, the accelerated expansion takes a shorter but less sparse
route to the final density matrix. Thus, the peak memory usage is
larger and in some iterations significantly more work is required.
However, the acceleration gives a substantial reduction of the overall
computational time needed for the recursive expansion.

\begin{figure}[ht!]
        \centering
        \captionsetup[subfigure]{justification=centering}
        \begin{subfigure}[t]{.45\textwidth}
        \centering
	  \includegraphics[width=\textwidth]{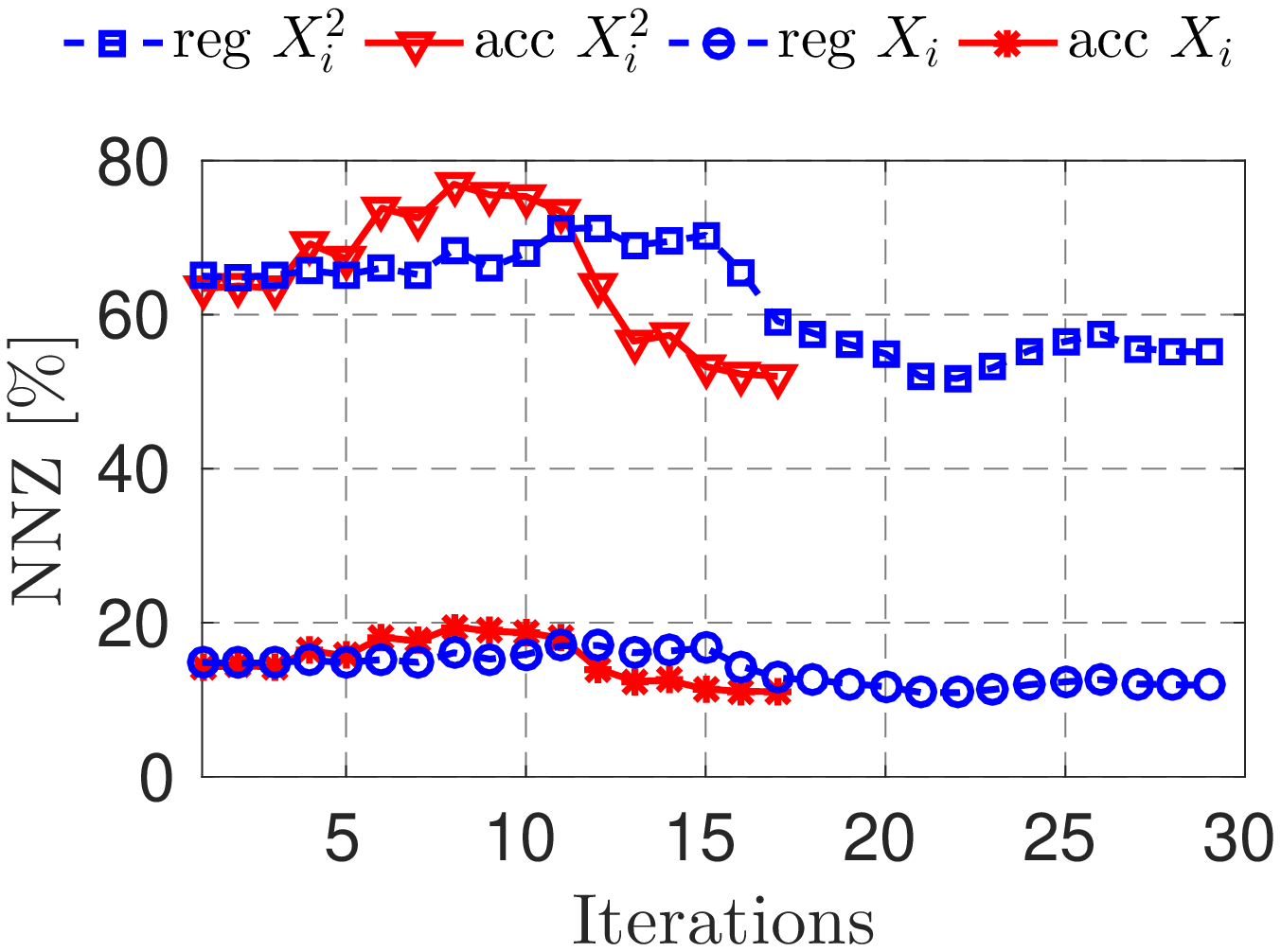}
	  \caption{Number of non-zeros}
	\end{subfigure}
	\begin{subfigure}[t]{.45\textwidth}
	\centering
	  \includegraphics[width=\textwidth]{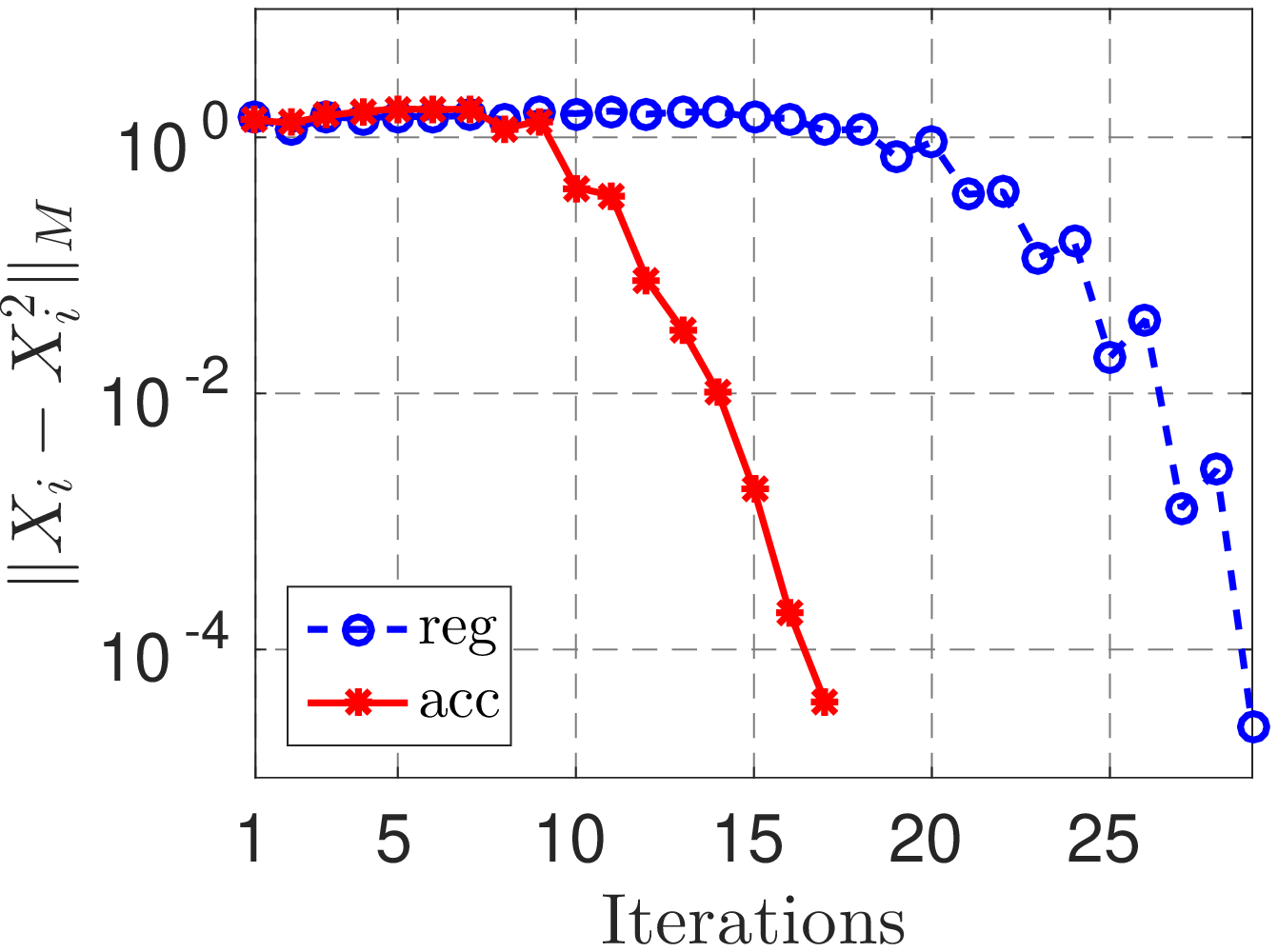}
	  \caption{Idempotency error measured using the mixed norm}
	\end{subfigure}
        \caption{Comparison of the regular and accelerated SP2
          expansions in the last cycle in a
          self-consistent field Hartree--Fock calculation for a water
          cluster with 4158 water molecules.}
        \label{fig:water_SCF_acc_noacc_last_SCF_cycle}
\end{figure}

Figure \ref{fig:water_SCF_bounds_eigs} shows for each SCF cycle
intervals containing the homo and lumo eigenvalues propagated from the
previous SCF cycle (initial homo/lumo) and the improved intervals
computed using information extracted from the recursive expansion
(estimated homo/lumo).

\begin{figure}[ht!]
        \centering
	\includegraphics[width=\textwidth]{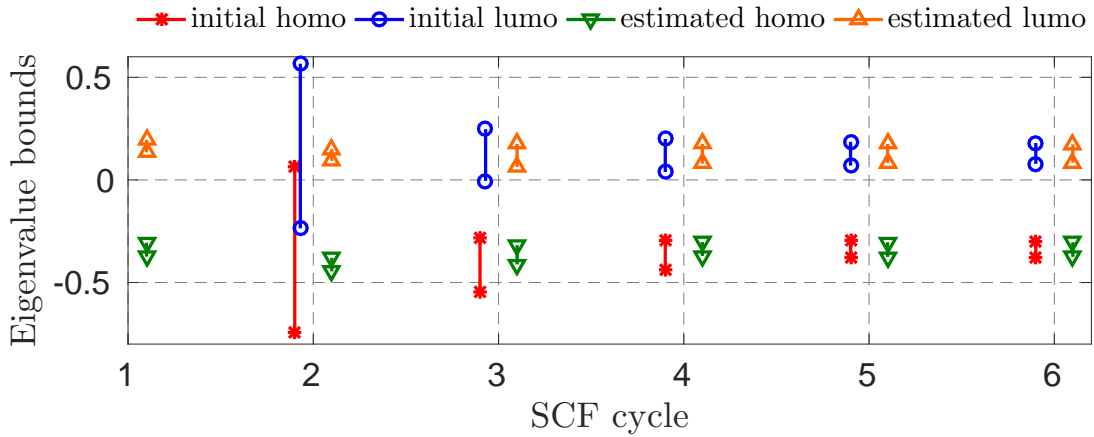}
        \caption{For each cycle in a self-consistent field Hartree--Fock
          calculation for a water cluster with 4158 water
          molecules with the SP2-ACC expansion,
          intervals containing the homo and lumo eigenvalues
          propagated from the previous SCF cycle (initial) and
          computed as a by-product of the recursive expansion
          (estimated). The corresponding figure for regular SP2 is
          essentially identical.
        \label{fig:water_SCF_bounds_eigs}
        }
\end{figure}

Table~\ref{tab:water_start_check_puri} shows upper and lower bounds
$n_{\textrm{max}}$ and $n_{\textrm{min}}$, respectively, and actual
number of iterations $n$ in the recursive expansion in each SCF
cycle. In the two initial SCF cycles the homo and lumo intervals are
overlapping, see Figure~\ref{fig:water_SCF_bounds_eigs}, and therefore
we cannot use Algorithm~\ref{alg:selection_polynomials_sp2acc} to
determine the sequence of polynomials. Thus, an upper bound of the
number of iterations $n_{\textrm{max}}$ based on the homo and lumo
intervals cannot be computed.  The acceleration has an effect as soon
as the outer bounds for homo and lumo are better than the extremal
bounds ($\lambda_\textrm{min}$ and $\lambda_\textrm{max}$), in our
case already in the second SCF cycle. However, in the initial SCF
cycles the outer bounds are loose, making the acceleration less
effective than in later iterations.  Starting from the third SCF cycle
the upper and lower bounds on the number of iterations are given by
Algorithm~\ref{alg:selection_polynomials_sp2acc}. In iteration
$n_{\textrm{min}}$ the acceleration has been turned off, as discussed
in Section~\ref{sec:sp2}.
Note that the proposed stopping criteria are used in each SCF cycle
independently of the method for choosing polynomials. The recursive
expansion is stopped as soon as the estimated order of convergence
computed using the mixed norm is smaller than $\tilde{q}=1.8$.

 \begin{table}[ht!] \centering
 \ra{1.1}
     \caption{Upper estimate $n_{\textrm{max}}$, actual number of
       iterations $n$ in recursive expansion in each SCF cycle and the
       iteration $n_{\textrm{min}}$ where the acceleration has been
       turned off. There is no acceleration in the first SCF cycle,
       and at least two iterations should be performed to be able to
       check the stopping criterion.}
   \begin{tabular}{l  l l l l l l} \toprule
   SCF cycle & 1 & 2 & 3 & 4 & 5 & 6\\
   \midrule
   $n_{\textrm{max}}$ & \,\texttt{-}  & \,\texttt{-}  & 26 & 24 & 24 & 23\\
   \hline
   $n$ & 29 & 20 & 19 & 17 & 18 & 17\\
   \hline
   $n_{\textrm{min}}$ & 2  & 14  & 14 & 14 & 14 & 14\\
   \bottomrule
    \end{tabular} 
    \label{tab:water_start_check_puri}
 \end{table}

The proposed stopping criteria, the acceleration technique and
efficient estimation of the homo and lumo eigenvalues give a
significant performance improvement of the recursive density matrix
expansion.  We show with the water cluster example that the use of the
SP2-ACC polynomials reduces the number of matrix-matrix
multiplications in comparison to the regular SP2 scheme and the
proposed stopping criteria enable us to reach the level of attainable
accuracy without spending redundant computational effort in the
stagnation phase.

The length of the conditioning phase depends on the homo-lumo gap.
Smaller homo-lumo gap will give a longer phase. The acceleration
technique reduces the length of the conditioning phase, but has no
significant impact in the purification phase. Our stopping criteria
are designed to automatically detect when numerical errors start to
dominate. By introducing a parameter $C_q$
satisfying~\eqref{eq:Cq_relation} we eliminate the need to determine
an iteration after which one can check the stopping criterion. The
proposed stopping criteria can be checked already in the first
iteration.  Their efficiency depends on the closeness of the chosen
value $C_q$ to the smallest value satisfying~\eqref{eq:Cq_relation},
but does not depend on the length of phases in the recursive
expansion, and is thus independent of the homo-lumo gap.

\section{Discussion}

The stopping criterion is an important aspect of the development
process of any iterative method. Many works address this question for
linear
systems~\cite{arioli1992stopping,arioli2013stopping,axelsson2001error,frommer2008stopping,kaasschieter1988practical}
and
eigensolvers~\cite{bennani1994stopping,Golub2000,vomel2008state}. In
general, such stopping criteria are based on controlling the norm of a
suitable residual or estimate of the norm of the error. The
convergence path can be irregular and contain stagnation regions.

Stopping criteria for iterative methods for matrix functions are often
based on the relative distance of the subsequent iterates $X_k$ and
$X_{k+1}$~\cite{book-higham}.
As soon as the distance becomes
smaller than some predefined threshold value the iterations stop.
However, in the presence of numerical errors it is hard to find an
optimal threshold value. 
To illustrate that our approach to develop stopping criteria for
iterative methods is applicable also to other matrix iterations, we
derive a stopping criterion of the same type for the Newton sign
matrix iteration in Appendix~\ref{sec:sign_matrix}.

Newton's method to find roots of real-valued functions is locally at
least quadratically convergent for simple roots. Also in this case,
the iterations are typically stopped when some error measure goes
below a predefined threshold value. Often one wants to continue
iterating until numerical errors (from e.g. floating point roundoff)
prevent any further decrease of the error.  The threshold value is
often chosen in terms of the expected accuracy of the evaluation of
the function, e.g. some multiple of machine epsilon. To avoid the
selection of threshold value one may, following the lines of the
present work, analyze the convergence behavior and devise a stopping
criterion based on the detection of a fall in convergence order, see
Appendix~\ref{sec:newton}.  This is in particular useful in cases when
the accuracy in the evaluation of a function is not known or depends
non-trivially on program input parameters.

\section{Concluding remarks}
Recursive expansions to compute the density matrix in electronic
structure calculations are usually stopped when the idempotency error
goes below some predefined tolerance. The main problem with such an
approach is that an appropriate value for the tolerance is difficult
to select. If the tolerance is small in relation to numerical errors
coming from removal of matrix elements or rounding errors, the
iterations will never stop. If the tolerance is large in relation to
numerical errors,
the same accuracy could be achieved with less effort.

In previous work we addressed the never stop issue by tightening the
tolerance for removal of small matrix elements at the end of the
expansion until the desired accuracy is achieved for
eigenvalues~\cite{m-accPuri}. However, this approach also involves a
user defined parameter for the accuracy in eigenvalues with potential
impact on convergence and computational cost.  

The practical usefulness of our new stopping criteria proposed here
can be seen in the context of the {\sc Ergo}~\cite{ergo_web,Ergo2011}
program where the new stopping criterion allows the number of input
parameters to the program to be reduced, since only a single parameter
for the density matrix construction accuracy is now needed. The new
stopping criterion also solves previous problems with failed
convergence when using default {\sc Ergo} parameters; previously, when using
the default parameters the recursive expansion failed due to rounding
errors in some cases, particularly for larger molecules where the
effect of rounding errors is more pronounced.

If the homo and lumo eigenvalues are known in advance one may compute
an upper bound for the number of iterations in advance by iterating
until the homo and lumo eigenvalues are within rounding error from
their desired values. The number of iterations in such an approach
corresponds to $n_{\textrm{max}}$ in the present work. For high
accuracy calculations this may be a reasonable approach but often it
would lead to superfluous iterations, as for example can be seen for
the water cluster calculations in Section~\ref{sec:scf_calc}.

In the present work three phases of the recursive expansion were
identified: conditioning, purification, and stagnation. The
appropriate moment to stop the expansion is at the transition between
purification and stagnation.  At this transition there is a drop in
the order of convergence.  By detection of this drop we are able to
stop the expansion at the appropriate moment without any user defined
parameters. The transition to stagnation is accurately detected even
if the idempotency error continues to slowly decrease.  By altering
the asymptotic error constant in the observed order of convergence we
avoid an early stop in the conditioning phase.

\begin{acknowledgement}

Support from the G{\"o}ran Gustafsson foundation, the Swedish research
council (grant no. 621-2012-3861), the Lisa and Carl--Gustav Esseen
foundation, and the Swedish national strategic e-science research
program (eSSENCE) is gratefully acknowledged. Computational resources
were provided by the Swedish National Infrastructure for Computing
(SNIC) at Uppsala Multidisciplinary Center for Advanced Computational
Science (UPPMAX).
  
\end{acknowledgement}

\begin{appendix}

\section{Sign matrix iterations}
\label{sec:sign_matrix}

Applying Newton's method to the function $f(X) = X^2-I$ gives an
iteration
\begin{align}
     X_{k+1} = \frac{1}{2}(X_k^{-1} + X_k), \quad X_0 = A,
\end{align}
for the sign matrix function which converges quadratically provided
that $A\in \mathbb{C}^{n\times n}$ has no eigenvalues on the imaginary
axis~\cite{book-higham}.

Since the matrices $X_k$ and $X_{k+1} - X_k$ commute, the Taylor
expansion of the matrix function $f(X_{k+1}) = X_{k+1}^2-I$ is given
by~\cite{Al-Mohy2009,Deadman2016}
\begin{align}
f(X_{k+1}) &= f(X_k) + f'(X_k)(X_{k+1} - X_k) + R_2(X_k) \\
&= f(X_k) + f'(X_k)(- (f'(X_k))^{-1}f(X_k)) + R_2(X_k) \\
&= R_2(X_k),
\end{align}
where the truncation error for the spectral norm~\cite{Mathias1993} is
bounded
\begin{align}
\|R_2(X_k)\|_2 &\leq \frac{1}{2} \|X_{k+1} - X_k\|_2^2 \max_{s\in[0,1]} \|f''(X_k + s(X_{k+1} - X_k))\|_2\\
&\leq \frac{1}{2}\|f(X_k)\|_2^2\|(f'(X_k))^{-1}\|_2^2  \max_{s\in[0,1]} \|f''(X_k + s(X_{k+1} - X_k))\|_2 \\
&=\frac{1}{4} \|X_k^{-1}\|_2^2 \|f(X_k)\|_2^2,
\end{align}
where we have used that $f'(X_k) = 2X_k$ and $ f''(X_k) = 2I$.

Thus in exact arithmetics
\begin{align}
\|f(X_{k+1})\|_2 \leq \frac{1}{4} \|X_k^{-1}\|_2^2 \|f(X_k)\|_2^2,
\end{align}
which suggests to stop the iterations as soon as
\begin{align}
 \|f(X_k)\|_2 < 1 \quad \text{and} \quad \frac{\log(\|f(X_{k+1})\|_2/C_k)}{\log(\|f(X_k)\|_2)} \leq 1.8,
\end{align}
where
\begin{align}
 C_k \geq \frac{1}{4} \|X_k^{-1}\|_2^2.
\end{align}
The value of $C_k$ should be chosen the smallest possible, see the
discussion in section~\ref{sec:paramless}.

If in addition we assume that the matrix $A$ is normal, then for
$k\geq 1$ the absolute values of the eigenvalues of $X_k$ are bounded
from below by 1 and $\|X_k^{-1}\|_2^2 \leq 1$.  Thus for all $k\geq 1$ we
define
\begin{align}
C_k := C = \frac{1}{4}.
\end{align}

\section{Stopping criteria for  Newton's method}
\label{sec:newton}

Let $f(x)$ be a twice continuously differentiable function, $f(x^*)=0$, and
$f'(x^*) \neq 0$. Then,  Newton's method
\begin{equation} \label{eq:newton}
  x_{k+1} = x_k + \Delta x_k, \quad \Delta x_k = -\frac{f(x_k)}{f'(x_k)}
\end{equation}
is locally quadratically convergent to $x^*$.

Assume that $f'(x_k) \neq 0$. Taylor expansion around $x_k$ with step
$\Delta x_k$ and using \eqref{eq:newton} and Lagrange's form of the
remainder gives
\begin{align}
f(x_{k+1}) & = \frac{f''(\xi_k)}{2(f'(x_{k}))^2}(f(x_{k}))^2
\end{align}
where $\xi_k$ is some value between $x_k$ and $x_{k+1}$.  Following
the ideas of the present article, this suggests the stopping
criterion: stop as soon as
\begin{align}
  |f(x_{k})| & < 1 \quad \textrm{ and } \quad 
\frac{\log |f(x_{k+1})/C_k|}{\log |f(x_{k})|} < 1.8,
\end{align}
where 
\begin{equation}
  C_k = \frac{F_k}{2(f'(x_{k}))^2}, \quad F_k \geq \max_{\xi \textrm{ between }x_{k} \textrm{ and } x_{k+1}} |f''(\xi)|.
\end{equation}
We note that if $f'(x_k)$ comes close to zero $C_k$ becomes very large
and the stopping criterion is not triggered. Thus, there is no need
for special treatment in such cases.

\end{appendix}

\bibliography{biblio}

\providecommand{\latin}[1]{#1}
\providecommand*\mcitethebibliography{\thebibliography}
\csname @ifundefined\endcsname{endmcitethebibliography}
  {\let\endmcitethebibliography\endthebibliography}{}
\begin{mcitethebibliography}{50}
\providecommand*\natexlab[1]{#1}
\providecommand*\mciteSetBstSublistMode[1]{}
\providecommand*\mciteSetBstMaxWidthForm[2]{}
\providecommand*\mciteBstWouldAddEndPuncttrue
  {\def\EndOfBibitem{\unskip.}}
\providecommand*\mciteBstWouldAddEndPunctfalse
  {\let\EndOfBibitem\relax}
\providecommand*\mciteSetBstMidEndSepPunct[3]{}
\providecommand*\mciteSetBstSublistLabelBeginEnd[3]{}
\providecommand*\EndOfBibitem{}
\mciteSetBstSublistMode{f}
\mciteSetBstMaxWidthForm{subitem}{(\alph{mcitesubitemcount})}
\mciteSetBstSublistLabelBeginEnd
  {\mcitemaxwidthsubitemform\space}
  {\relax}
  {\relax}

\bibitem[Roothaan(1951)]{Roothaan}
Roothaan,~C. C.~J. \emph{Rev. Mod. Phys.} \textbf{1951}, \emph{23},
  69--89\relax
\mciteBstWouldAddEndPuncttrue
\mciteSetBstMidEndSepPunct{\mcitedefaultmidpunct}
{\mcitedefaultendpunct}{\mcitedefaultseppunct}\relax
\EndOfBibitem
\bibitem[Hohenberg and Kohn(1964)Hohenberg, and Kohn]{hohen}
Hohenberg,~P.; Kohn,~W. \emph{Phys. Rev.} \textbf{1964}, \emph{136},
  B864--B871\relax
\mciteBstWouldAddEndPuncttrue
\mciteSetBstMidEndSepPunct{\mcitedefaultmidpunct}
{\mcitedefaultendpunct}{\mcitedefaultseppunct}\relax
\EndOfBibitem
\bibitem[Kohn and Sham(1965)Kohn, and Sham]{KohnSham65}
Kohn,~W.; Sham,~L.~J. \emph{Phys. Rev.} \textbf{1965}, \emph{140}, 1133\relax
\mciteBstWouldAddEndPuncttrue
\mciteSetBstMidEndSepPunct{\mcitedefaultmidpunct}
{\mcitedefaultendpunct}{\mcitedefaultseppunct}\relax
\EndOfBibitem
\bibitem[Bowler and Miyazaki(2012)Bowler, and Miyazaki]{Bowler_2012}
Bowler,~D.~R.; Miyazaki,~T. \emph{Rep. Prog. Phys.} \textbf{2012}, \emph{75},
  036503\relax
\mciteBstWouldAddEndPuncttrue
\mciteSetBstMidEndSepPunct{\mcitedefaultmidpunct}
{\mcitedefaultendpunct}{\mcitedefaultseppunct}\relax
\EndOfBibitem
\bibitem[Goedecker and Colombo(1994)Goedecker, and
  Colombo]{GoedeckerColombo1994}
Goedecker,~S.; Colombo,~L. \emph{Phys. Rev. Lett.} \textbf{1994}, \emph{73},
  122--125\relax
\mciteBstWouldAddEndPuncttrue
\mciteSetBstMidEndSepPunct{\mcitedefaultmidpunct}
{\mcitedefaultendpunct}{\mcitedefaultseppunct}\relax
\EndOfBibitem
\bibitem[Higham(2008)]{book-higham}
Higham,~N.~J. \emph{Functions of matrices: theory and computation}; SIAM:
  Philadelphia, 2008\relax
\mciteBstWouldAddEndPuncttrue
\mciteSetBstMidEndSepPunct{\mcitedefaultmidpunct}
{\mcitedefaultendpunct}{\mcitedefaultseppunct}\relax
\EndOfBibitem
\bibitem[Rubensson(2012)]{EHRubensson12}
Rubensson,~E.~H. \emph{SIAM J. Sci. Comput.} \textbf{2012}, \emph{34},
  B1--B23\relax
\mciteBstWouldAddEndPuncttrue
\mciteSetBstMidEndSepPunct{\mcitedefaultmidpunct}
{\mcitedefaultendpunct}{\mcitedefaultseppunct}\relax
\EndOfBibitem
\bibitem[Rubensson \latin{et~al.}(2008)Rubensson, Rudberg, and
  Sa{\l}ek]{m-accPuri}
Rubensson,~E.~H.; Rudberg,~E.; Sa{\l}ek,~P. \emph{J. Chem. Phys.}
  \textbf{2008}, \emph{128}, 074106\relax
\mciteBstWouldAddEndPuncttrue
\mciteSetBstMidEndSepPunct{\mcitedefaultmidpunct}
{\mcitedefaultendpunct}{\mcitedefaultseppunct}\relax
\EndOfBibitem
\bibitem[Rudberg and Rubensson(2011)Rudberg, and
  Rubensson]{RudbergRubenssonJPCM11}
Rudberg,~E.; Rubensson,~E.~H. \emph{J. Phys.: Condens. Matter} \textbf{2011},
  \emph{23}, 075502\relax
\mciteBstWouldAddEndPuncttrue
\mciteSetBstMidEndSepPunct{\mcitedefaultmidpunct}
{\mcitedefaultendpunct}{\mcitedefaultseppunct}\relax
\EndOfBibitem
\bibitem[Benzi \latin{et~al.}(2013)Benzi, Boito, and Razouk]{benzi_decay}
Benzi,~M.; Boito,~P.; Razouk,~N. \emph{SIAM Rev.} \textbf{2013}, \emph{55},
  3--64\relax
\mciteBstWouldAddEndPuncttrue
\mciteSetBstMidEndSepPunct{\mcitedefaultmidpunct}
{\mcitedefaultendpunct}{\mcitedefaultseppunct}\relax
\EndOfBibitem
\bibitem[McWeeny(1956)]{RMcWeeny56}
McWeeny,~R. \emph{Proc. R. Soc. London Ser. A} \textbf{1956}, \emph{235},
  496--509\relax
\mciteBstWouldAddEndPuncttrue
\mciteSetBstMidEndSepPunct{\mcitedefaultmidpunct}
{\mcitedefaultendpunct}{\mcitedefaultseppunct}\relax
\EndOfBibitem
\bibitem[Palser and Manolopoulos(1998)Palser, and Manolopoulos]{PM1998}
Palser,~A. H.~R.; Manolopoulos,~D.~E. \emph{Phys. Rev. B} \textbf{1998},
  \emph{58}, 12704--12711\relax
\mciteBstWouldAddEndPuncttrue
\mciteSetBstMidEndSepPunct{\mcitedefaultmidpunct}
{\mcitedefaultendpunct}{\mcitedefaultseppunct}\relax
\EndOfBibitem
\bibitem[Niklasson(2002)]{Nikl2002}
Niklasson,~A. M.~N. \emph{Phys. Rev. B} \textbf{2002}, \emph{66}, 155115\relax
\mciteBstWouldAddEndPuncttrue
\mciteSetBstMidEndSepPunct{\mcitedefaultmidpunct}
{\mcitedefaultendpunct}{\mcitedefaultseppunct}\relax
\EndOfBibitem
\bibitem[Rubensson(2011)]{Rub2011}
Rubensson,~E.~H. \emph{J. Chem. Theory Comput.} \textbf{2011}, \emph{7},
  1233--1236\relax
\mciteBstWouldAddEndPuncttrue
\mciteSetBstMidEndSepPunct{\mcitedefaultmidpunct}
{\mcitedefaultendpunct}{\mcitedefaultseppunct}\relax
\EndOfBibitem
\bibitem[VandeVondele \latin{et~al.}(2012)VandeVondele, Bor\v{s}tnik, and
  Hutter]{VandeVondele_2012}
VandeVondele,~J.; Bor\v{s}tnik,~U.; Hutter,~J. \emph{J. Chem. Theory Comput.}
  \textbf{2012}, \emph{8}, 3565--3573\relax
\mciteBstWouldAddEndPuncttrue
\mciteSetBstMidEndSepPunct{\mcitedefaultmidpunct}
{\mcitedefaultendpunct}{\mcitedefaultseppunct}\relax
\EndOfBibitem
\bibitem[Rudberg \latin{et~al.}()Rudberg, Rubensson, and Sa{\l}ek]{ergo_web}
Rudberg,~E.; Rubensson,~E.~H.; Sa{\l}ek,~P. Ergo (Version~3.4). Available at
  \url{http://www.ergoscf.org} (Accessed 14 June 2016)\relax
\mciteBstWouldAddEndPuncttrue
\mciteSetBstMidEndSepPunct{\mcitedefaultmidpunct}
{\mcitedefaultendpunct}{\mcitedefaultseppunct}\relax
\EndOfBibitem
\bibitem[Rudberg \latin{et~al.}(2011)Rudberg, Rubensson, and
  Sa{\l}ek]{Ergo2011}
Rudberg,~E.; Rubensson,~E.~H.; Sa{\l}ek,~P. \emph{J. Chem. Theory Comput.}
  \textbf{2011}, \emph{7}, 340--350\relax
\mciteBstWouldAddEndPuncttrue
\mciteSetBstMidEndSepPunct{\mcitedefaultmidpunct}
{\mcitedefaultendpunct}{\mcitedefaultseppunct}\relax
\EndOfBibitem
\bibitem[Bock \latin{et~al.}(2014)Bock, Challacombe, Gan, Henkelman, Nemeth,
  Niklasson, Odell, Schwegler, Tymczak, and Weber]{FreeON}
Bock,~N.; Challacombe,~M.; Gan,~C.~K.; Henkelman,~G.; Nemeth,~K.;
  Niklasson,~A.~M.~N.; Odell,~A.; Schwegler,~E.; Tymczak,~C.~J.; Weber,~V. {\sc
  FreeON}. 2014; {L}os {A}lamos {N}ational {L}aboratory (LA-CC 01-2;
  LA-CC-04-086), Copyright University of California.\relax
\mciteBstWouldAddEndPunctfalse
\mciteSetBstMidEndSepPunct{\mcitedefaultmidpunct}
{}{\mcitedefaultseppunct}\relax
\EndOfBibitem
\bibitem[Qin \latin{et~al.}(2015)Qin, Shang, Xiang, Li, and Yang]{honpas}
Qin,~X.; Shang,~H.; Xiang,~H.; Li,~Z.; Yang,~J. \emph{Int. J. Quantum Chem.}
  \textbf{2015}, \emph{115}, 647--655\relax
\mciteBstWouldAddEndPuncttrue
\mciteSetBstMidEndSepPunct{\mcitedefaultmidpunct}
{\mcitedefaultendpunct}{\mcitedefaultseppunct}\relax
\EndOfBibitem
\bibitem[Cawkwell and Niklasson(2012)Cawkwell, and Niklasson]{LATTE-jcp-2012}
Cawkwell,~M.~J.; Niklasson,~A. M.~N. \emph{J. Chem. Phys.} \textbf{2012},
  \emph{137}, 134105\relax
\mciteBstWouldAddEndPuncttrue
\mciteSetBstMidEndSepPunct{\mcitedefaultmidpunct}
{\mcitedefaultendpunct}{\mcitedefaultseppunct}\relax
\EndOfBibitem
\bibitem[Bor\v{s}tnik \latin{et~al.}(2014)Bor\v{s}tnik, VandeVondele, Weber,
  and Hutter]{Borstnik2014}
Bor\v{s}tnik,~U.; VandeVondele,~J.; Weber,~V.; Hutter,~J. \emph{Parallel
  Comput.} \textbf{2014}, \emph{40}, 47--58\relax
\mciteBstWouldAddEndPuncttrue
\mciteSetBstMidEndSepPunct{\mcitedefaultmidpunct}
{\mcitedefaultendpunct}{\mcitedefaultseppunct}\relax
\EndOfBibitem
\bibitem[Cawkwell \latin{et~al.}(2014)Cawkwell, Wood, Niklasson, and
  Mniszewski]{Cawkwell2014}
Cawkwell,~M.~J.; Wood,~M.~A.; Niklasson,~A. M.~N.; Mniszewski,~S.~M. \emph{J.
  Chem. Theory Comput.} \textbf{2014}, \emph{10}, 5391--5396\relax
\mciteBstWouldAddEndPuncttrue
\mciteSetBstMidEndSepPunct{\mcitedefaultmidpunct}
{\mcitedefaultendpunct}{\mcitedefaultseppunct}\relax
\EndOfBibitem
\bibitem[Chow \latin{et~al.}(2015)Chow, Liu, Smelyanskiy, and
  Hammond]{Chow_2015}
Chow,~E.; Liu,~X.; Smelyanskiy,~M.; Hammond,~J.~R. \emph{J. Chem. Phys.}
  \textbf{2015}, \emph{142}, 104103\relax
\mciteBstWouldAddEndPuncttrue
\mciteSetBstMidEndSepPunct{\mcitedefaultmidpunct}
{\mcitedefaultendpunct}{\mcitedefaultseppunct}\relax
\EndOfBibitem
\bibitem[Weber \latin{et~al.}(2015)Weber, Laino, Pozdneev, Fedulova, and
  Curioni]{Weber_2015}
Weber,~V.; Laino,~T.; Pozdneev,~A.; Fedulova,~I.; Curioni,~A. \emph{J. Chem.
  Theory Comput.} \textbf{2015}, \emph{11}, 3145--3152\relax
\mciteBstWouldAddEndPuncttrue
\mciteSetBstMidEndSepPunct{\mcitedefaultmidpunct}
{\mcitedefaultendpunct}{\mcitedefaultseppunct}\relax
\EndOfBibitem
\bibitem[Daniels and Scuseria(1999)Daniels, and Scuseria]{Daniels_1999}
Daniels,~A.~D.; Scuseria,~G.~E. \emph{J. Chem. Phys.} \textbf{1999},
  \emph{110}, 1321--1328\relax
\mciteBstWouldAddEndPuncttrue
\mciteSetBstMidEndSepPunct{\mcitedefaultmidpunct}
{\mcitedefaultendpunct}{\mcitedefaultseppunct}\relax
\EndOfBibitem
\bibitem[Mazziotti(2003)]{Mazziotti}
Mazziotti,~D.~A. \emph{Phys. Rev. E} \textbf{2003}, \emph{68}, 066701\relax
\mciteBstWouldAddEndPuncttrue
\mciteSetBstMidEndSepPunct{\mcitedefaultmidpunct}
{\mcitedefaultendpunct}{\mcitedefaultseppunct}\relax
\EndOfBibitem
\bibitem[Shao \latin{et~al.}(2003)Shao, Saravanan, Head-Gordon, and
  White]{curvy_steps}
Shao,~Y.; Saravanan,~C.; Head-Gordon,~M.; White,~C.~A. \emph{J. Chem. Phys.}
  \textbf{2003}, \emph{118}, 6144--6151\relax
\mciteBstWouldAddEndPuncttrue
\mciteSetBstMidEndSepPunct{\mcitedefaultmidpunct}
{\mcitedefaultendpunct}{\mcitedefaultseppunct}\relax
\EndOfBibitem
\bibitem[Suryanarayana(2013)]{Suryanarayana2013291}
Suryanarayana,~P. \emph{Chem. Phys. Lett.} \textbf{2013}, \emph{555},
  291--295\relax
\mciteBstWouldAddEndPuncttrue
\mciteSetBstMidEndSepPunct{\mcitedefaultmidpunct}
{\mcitedefaultendpunct}{\mcitedefaultseppunct}\relax
\EndOfBibitem
\bibitem[not()]{note_phases}
Note that only two phases were identified in the previous work of Rubensson and
  Niklasson~\cite{interior_eigenvalues_2014}. Here, a third \emph{stagnation
  phase} is identified.\relax
\mciteBstWouldAddEndPunctfalse
\mciteSetBstMidEndSepPunct{\mcitedefaultmidpunct}
{}{\mcitedefaultseppunct}\relax
\EndOfBibitem
\bibitem[Rubensson and Niklasson(2014)Rubensson, and
  Niklasson]{interior_eigenvalues_2014}
Rubensson,~E.~H.; Niklasson,~A. M.~N. \emph{SIAM J. Sci. Comput.}
  \textbf{2014}, \emph{36}, B147--B170\relax
\mciteBstWouldAddEndPuncttrue
\mciteSetBstMidEndSepPunct{\mcitedefaultmidpunct}
{\mcitedefaultendpunct}{\mcitedefaultseppunct}\relax
\EndOfBibitem
\bibitem[Rubensson and Rudberg(2011)Rubensson, and Rudberg]{mixedNormTrunc}
Rubensson,~E.~H.; Rudberg,~E. \emph{J. Comput. Chem.} \textbf{2011}, \emph{32},
  1411--1423\relax
\mciteBstWouldAddEndPuncttrue
\mciteSetBstMidEndSepPunct{\mcitedefaultmidpunct}
{\mcitedefaultendpunct}{\mcitedefaultseppunct}\relax
\EndOfBibitem
\bibitem[Holas(2001)]{Holas_2001}
Holas,~A. \emph{Chem. Phys. Lett.} \textbf{2001}, \emph{340}, 552--558\relax
\mciteBstWouldAddEndPuncttrue
\mciteSetBstMidEndSepPunct{\mcitedefaultmidpunct}
{\mcitedefaultendpunct}{\mcitedefaultseppunct}\relax
\EndOfBibitem
\bibitem[Mazziotti(2001)]{Mazziotti2001}
Mazziotti,~D.~A. \emph{J. Chem. Phys.} \textbf{2001}, \emph{115},
  8305--8311\relax
\mciteBstWouldAddEndPuncttrue
\mciteSetBstMidEndSepPunct{\mcitedefaultmidpunct}
{\mcitedefaultendpunct}{\mcitedefaultseppunct}\relax
\EndOfBibitem
\bibitem[Pulay(1980)]{pulay_1980}
Pulay,~P. \emph{Chem. Phys. Lett.} \textbf{1980}, \emph{73}, 393\relax
\mciteBstWouldAddEndPuncttrue
\mciteSetBstMidEndSepPunct{\mcitedefaultmidpunct}
{\mcitedefaultendpunct}{\mcitedefaultseppunct}\relax
\EndOfBibitem
\bibitem[Pulay(1982)]{pulay82}
Pulay,~P. \emph{J. Comput. Chem.} \textbf{1982}, \emph{3}, 556\relax
\mciteBstWouldAddEndPuncttrue
\mciteSetBstMidEndSepPunct{\mcitedefaultmidpunct}
{\mcitedefaultendpunct}{\mcitedefaultseppunct}\relax
\EndOfBibitem
\bibitem[Rubensson \latin{et~al.}(2007)Rubensson, Rudberg, and
  Sa{\l}ek]{hierarchical_matrix_lib}
Rubensson,~E.~H.; Rudberg,~E.; Sa{\l}ek,~P. \emph{J. Comput. Chem.}
  \textbf{2007}, \emph{28}, 2531--2537\relax
\mciteBstWouldAddEndPuncttrue
\mciteSetBstMidEndSepPunct{\mcitedefaultmidpunct}
{\mcitedefaultendpunct}{\mcitedefaultseppunct}\relax
\EndOfBibitem
\bibitem[ope()]{openblas}
OpenBLAS (Version~0.2.16). Available at \url{http://www.openblas.net} (Accessed
  14 June 2016)\relax
\mciteBstWouldAddEndPuncttrue
\mciteSetBstMidEndSepPunct{\mcitedefaultmidpunct}
{\mcitedefaultendpunct}{\mcitedefaultseppunct}\relax
\EndOfBibitem
\bibitem[Rubensson and Zahedi(2008)Rubensson, and
  Zahedi]{interior_eigenvalues_2008}
Rubensson,~E.~H.; Zahedi,~S. \emph{J. Chem. Phys.} \textbf{2008}, \emph{128},
  176101\relax
\mciteBstWouldAddEndPuncttrue
\mciteSetBstMidEndSepPunct{\mcitedefaultmidpunct}
{\mcitedefaultendpunct}{\mcitedefaultseppunct}\relax
\EndOfBibitem
\bibitem[Arioli \latin{et~al.}(1992)Arioli, Duff, and Ruiz]{arioli1992stopping}
Arioli,~M.; Duff,~I.; Ruiz,~D. \emph{SIAM J. Matrix Anal. Appl.} \textbf{1992},
  \emph{13}, 138--144\relax
\mciteBstWouldAddEndPuncttrue
\mciteSetBstMidEndSepPunct{\mcitedefaultmidpunct}
{\mcitedefaultendpunct}{\mcitedefaultseppunct}\relax
\EndOfBibitem
\bibitem[Arioli \latin{et~al.}(2013)Arioli, Georgoulis, and
  Loghin]{arioli2013stopping}
Arioli,~M.; Georgoulis,~E.~H.; Loghin,~D. \emph{SIAM J. Sci. Comput.}
  \textbf{2013}, \emph{35}, A1537--A1559\relax
\mciteBstWouldAddEndPuncttrue
\mciteSetBstMidEndSepPunct{\mcitedefaultmidpunct}
{\mcitedefaultendpunct}{\mcitedefaultseppunct}\relax
\EndOfBibitem
\bibitem[Axelsson and Kaporin(2001)Axelsson, and Kaporin]{axelsson2001error}
Axelsson,~O.; Kaporin,~I. \emph{Numer. Linear Algebra Appl.} \textbf{2001},
  \emph{8}, 265--286\relax
\mciteBstWouldAddEndPuncttrue
\mciteSetBstMidEndSepPunct{\mcitedefaultmidpunct}
{\mcitedefaultendpunct}{\mcitedefaultseppunct}\relax
\EndOfBibitem
\bibitem[Frommer and Simoncini(2008)Frommer, and
  Simoncini]{frommer2008stopping}
Frommer,~A.; Simoncini,~V. \emph{SIAM J. Sci. Comput.} \textbf{2008},
  \emph{30}, 1387--1412\relax
\mciteBstWouldAddEndPuncttrue
\mciteSetBstMidEndSepPunct{\mcitedefaultmidpunct}
{\mcitedefaultendpunct}{\mcitedefaultseppunct}\relax
\EndOfBibitem
\bibitem[Kaasschieter(1988)]{kaasschieter1988practical}
Kaasschieter,~E. \emph{BIT Numer. Math.} \textbf{1988}, \emph{28},
  308--322\relax
\mciteBstWouldAddEndPuncttrue
\mciteSetBstMidEndSepPunct{\mcitedefaultmidpunct}
{\mcitedefaultendpunct}{\mcitedefaultseppunct}\relax
\EndOfBibitem
\bibitem[Bennani and Braconnier(1994)Bennani, and
  Braconnier]{bennani1994stopping}
Bennani,~M.; Braconnier,~T. \emph{CERFACS, Toulouse, France, Tech. Rep.
  TR/PA/94/22} \textbf{1994}, \relax
\mciteBstWouldAddEndPunctfalse
\mciteSetBstMidEndSepPunct{\mcitedefaultmidpunct}
{}{\mcitedefaultseppunct}\relax
\EndOfBibitem
\bibitem[Golub and Ye(2000)Golub, and Ye]{Golub2000}
Golub,~G.~H.; Ye,~Q. \emph{BIT Numer. Math.} \textbf{2000}, \emph{40},
  671--684\relax
\mciteBstWouldAddEndPuncttrue
\mciteSetBstMidEndSepPunct{\mcitedefaultmidpunct}
{\mcitedefaultendpunct}{\mcitedefaultseppunct}\relax
\EndOfBibitem
\bibitem[V{\"o}mel \latin{et~al.}(2008)V{\"o}mel, Tomov, Marques, Canning,
  Wang, and Dongarra]{vomel2008state}
V{\"o}mel,~C.; Tomov,~S.~Z.; Marques,~O.~A.; Canning,~A.; Wang,~L.-W.;
  Dongarra,~J.~J. \emph{J. Comput. Phys.} \textbf{2008}, \emph{227},
  7113--7124\relax
\mciteBstWouldAddEndPuncttrue
\mciteSetBstMidEndSepPunct{\mcitedefaultmidpunct}
{\mcitedefaultendpunct}{\mcitedefaultseppunct}\relax
\EndOfBibitem
\bibitem[Al-Mohy and Higham(2009)Al-Mohy, and Higham]{Al-Mohy2009}
Al-Mohy,~A.~H.; Higham,~N.~J. \emph{Numer. Algorithms} \textbf{2009},
  \emph{53}, 133--148\relax
\mciteBstWouldAddEndPuncttrue
\mciteSetBstMidEndSepPunct{\mcitedefaultmidpunct}
{\mcitedefaultendpunct}{\mcitedefaultseppunct}\relax
\EndOfBibitem
\bibitem[Deadman and Relton(2016)Deadman, and Relton]{Deadman2016}
Deadman,~E.; Relton,~S.~D. \emph{Linear Algebra Appl.} \textbf{2016},
  \emph{504}, 354 -- 371\relax
\mciteBstWouldAddEndPuncttrue
\mciteSetBstMidEndSepPunct{\mcitedefaultmidpunct}
{\mcitedefaultendpunct}{\mcitedefaultseppunct}\relax
\EndOfBibitem
\bibitem[Mathias(1993)]{Mathias1993}
Mathias,~R. \emph{SIAM J. Matrix Anal. Appl.} \textbf{1993}, \emph{14},
  1061--1063\relax
\mciteBstWouldAddEndPuncttrue
\mciteSetBstMidEndSepPunct{\mcitedefaultmidpunct}
{\mcitedefaultendpunct}{\mcitedefaultseppunct}\relax
\EndOfBibitem
\end{mcitethebibliography}

\end{document}